\newcommand{\E}{\mathbb{E}}
\newcommand{\PR}{\mathbb{P}} 
\newcommand{\A}{\mathcal{A}}
\tikzset{
    -Latex,auto,node distance =1 cm and 1 cm,semithick,
    state/.style ={circle, draw, minimum width = 0.5 cm},
    point/.style = {circle, draw, inner sep=0.04cm,fill,node contents={}},
    bidirected/.style={Latex-Latex,dashed},
    el/.style = {inner sep=2pt, align=left, sloped}
}\usepackage{scalefnt}
\newcommand{\indep}{\perp \!\!\! \perp}
\newcommand{\R}{\mathbb{R}}
\theoremstyle{definition}
\newtheorem{lemma}{Lemma}[section]
\newtheorem{theorem}{Theorem}[section]
\newtheorem{observation}{Observation}[section]
\newtheorem{assm}{Assumption}[section]
\newtheorem{corr}{Corollary}[section]
\begin{document}

\title{
A proxy-based approach for unmeasured confounding in electronic health records research
}

\author
{Haley Colgate Kottler haley.kottler@wisc.edu \\
Department of Mathematics and Department of Emergency Medicine,\\ University of Wisconsin - Madison, Madison, WI, USA \\~\\
and\\~\\
Amy Cochran cochran4@wisc.edu \\
Department of Mathematics, Department of Population Health Sciences,\\ and Department of Emergency Medicine,\\ University of Wisconsin - Madison, Madison, WI, USA}

\maketitle

\begin{abstract}
Electronic health records (EHR) are widely used to study clinical decisions, yet unmeasured confounding remains a persistent challenge. Proxy variables offer a potential solution. In EHR data, clinicians already record many such measurements (e.g., vitals), each revealing something about a patient's underlying health. Despite this, proxy-based methods are rarely used in practice. We introduce a new way to use proxies to adjust for unmeasured confounding. Our approach uses a vector of proxies to construct covariates that capture aspects of the unmeasured confounder, which are then included in a regression model. As one implementation, we use factor analysis followed by regression. We compare this approach with existing methods, including proximal causal inference, across a range of realistic settings. In practice, assumptions rarely hold exactly, so we study what happens when models are misspecified and variables are used incorrectly: e.g., a confounder or instrument is treated as a proxy. Finally, we apply the method to EHR data to estimate the effect of hospital admission for older adults presenting to the emergency department with chest pain, a setting where unmeasured confounding is a substantial concern. This work provides a practical way to use proxies and may help bring proxy-based methods into broader use.
\end{abstract}

Key words: Causal inference; Latent variables; Potential outcomes; Proxy variables; Unmeasured confounding.

\section{Introduction}
Electronic health records (EHR) are a useful resource for improving clinical decision-making, but even detailed datasets like EHR suffer from unmeasured confounding. Natural experiments and instrumental variables (IV) offer solutions, but they rely on conditions that often do not hold in practice. As a result, there is a need for approaches that can be applied more broadly. One promising but underutilized strategy leverages indirect measurements of the unmeasured confounder, called proxies.  Proxies are routinely collected in EHR data (e.g., vital signs may reflect an underlying health condition), and therefore offer a practical opportunity to address unmeasured confounding. Despite this, proxy-based methods remain underused in applied research.

Because proxies resemble other pre-treatment variables, it is tempting to treat them as measured confounders and adjust for them directly. Early work by Greenland argued that adjusting by a binary proxy reduces bias by pulling estimates of an odds ratio toward the truth.\cite{greenland80}  However, Ogburn and Vanderweele later showed that this approach can instead amplify bias when estimating the average treatment effect (ATE).\cite{ogburnVanderweele12} More recent work identifies settings where this approach can help,\cite{Pena21} but relies on restrictive assumptions and data types.

Another idea is to model the full joint distribution of the proxies, treatment, outcome, and unmeasured confounder. In some applications, this has reduced bias and even reversed conclusions in ways that better align with clinical expectations.\cite{Amy,avendano2020average, avendano2022average, avendano2025revisits,gustavson2022handling} More flexible versions use neural networks to learn this joint distribution, but come with limited theoretical guidance on when identification holds.\cite{DNN}  In high stakes fields like healthcare, reliance on a fully specified, yet largely unverified, model of the data-generating process may discourage use.

\begin{figure}
\centering
\scalefont{1}
\tikzset{state/.style={draw,circle}}
\begin{tikzpicture}
\begin{scope}
    \node at (-1.95,1) {\Large \textbf{A)}};
    \node[state] (a) at (0,0) {$A$};
    \node[state] (y) at (2,0) {$Y$};
    \node[state] (u) at (1,1) {$U$};
    \node[state] (z) at (-1,1) {$Z$};
    \draw[very thick] (u.south west) -- (a.north east);
    \draw[very thick] (u.south east) -- (y.north west);
    \draw[very thick] (a.east) -- (y.west);
    \draw[very thick] (u.west)--(z.east);
\end{scope}
\begin{scope}[xshift=150]
    \node at (-1.95,1) {\Large \textbf{B)}};
    \node[state] (a) at (0,0) {$A$};
    \node[state] (y) at (2,0) {$Y$};
    \node[state] (u) at (1,1) {$U$};
    \node[state] (v) at (-1,1) {$V$};
    \node[state] (w) at (3,1) {$W$};
    \draw[very thick] (u.south west) -- (a.north east);
    \draw[very thick] (u.south east) -- (y.north west);
    \draw[very thick] (a.east) -- (y.west);
    \draw[very thick] (u.west)--(v.east);
    \draw[very thick] (v.south east) -- (a.north west);
    \draw[very thick] (u.east)--(w.west);
    \draw[very thick] (w.south west) -- (y.north east);
\end{scope}
\end{tikzpicture}
\caption{DAGs illustrating unmeasured confounding and proxy variables: A) A proxy variable $Z$ influenced by an unmeasured confounder $U$ but independent of both treatment $A$ and outcome $Y$ given $U$. B) Proxy variables $V$ and $W$, respectively called a negative control exposure and negative control outcome, both influenced by an unmeasured confounder $U$ but still dependent on either the treatment $A$ or outcome $Y$ given $U$.}
\label{fig:dag}
\end{figure}

A different line of work classifies proxies based on their causal roles. Figure~\ref{fig:dag} uses directed acyclic graphs (DAGs) to illustrate this idea. Here, $U$ is an unmeasured confounder affecting treatment $A$ and outcome $Y$. A simple proxy, $Z$, is influenced by $U$ but has no direct relationship with $A$ or $Y$. In contrast, a negative control exposure (NCE), $V$, may influence the treatment but not the outcome, while a negative control outcome (NCO), $W$, may influence the outcome but not the treatment.\cite{cui2023semiparametric,TchetgenCOCA} 

Early work used a single proxy to detect unmeasured confounding,\cite{YANG2024111228} with targeted strategies for selection bias and measurement bias.\cite{arnold2016brief} Later work showed how an NCO can correct for unmeasured confounding, starting with the Control Outcome Calibration Approach (COCA) \cite{TchetgenCOCA}  and its extensions.\cite{sofer2016negative, tchetgen2023single}  Using a post-treatment measurement as a proxy for the treatment-free potential outcome, COCA assumes that any confounding between the treatment and the NCO can be accounted for by the observed covariates and the treatment-free potential outcome.\cite{YANG2024111228}  This strategy targets the treatment effects among the treated rather than the ATE.

A major advance came from Kuroki and Pearl\cite{KurokiPearl2014} with subsequent generalization by Miao et al.\cite{Miao} showing that the ATE can be identified with two proxies: an NCE and an NCO, as in Figure~\ref{fig:dag}B.  Their method, termed Proximal Causal Inference (PCI), has become a central approach. In practice, PCI requires solving a \textit{bridge function} equation,\cite{Shi2020Selective} which requires parametric modeling to stabilize estimation since even small errors can lead to ``excessive uncertainty in obtaining a solution to the equation."\cite{tchetgen2020introduction} Liu et al. further note ``these methods can be challenging to implement because they involve solving complex integral equations that are typically ill-posed."\cite{regressionProximal}

Despite strong theory and the widespread availability of proxies in EHR data, PCI remains underused in applied research.  A recent review attributes the gap to ``insufficient guidance on the application in research practice."\cite{YANG2024111228}  Regression-based versions improve accessibility to PCI using generalized linear models.\cite{regressionProximal} These methods accommodate log, logit, and identity link functions for both the outcome and NCO models, treating each combination separately and yielding a set of case-specific procedures. But they impose specific modeling assumptions, including no interaction between $U$ and $A$ and a scalar confounder unless both the outcome and NCO are polytomous. Whether these advances will lead to broader use remains to be seen. 

In this work, we demonstrate that proxy-based adjustment can be carried out through a different route, broadening the set of approaches available in practice. Rather than focusing on NCE–NCO pairs as in PCI, we consider \textit{a vector} of proxies that function as both, as in Figure~\ref{fig:dag}A. The general approach has two steps. First, we use the proxies to construct covariates that summarize aspects of the unmeasured confounder. Second, we include these covariates in an outcome model to adjust for confounding. As one concrete implementation, we use factor analysis to construct these covariates, yielding a simple procedure based on familiar tools: factor analysis followed by regression. We compare this approach side by side with existing methods, including PCI and inverse probability weighting. Because assumptions rarely hold in practice, we focus on how they perform when models are misspecified or when variables are used incorrectly as proxies, confounders, or instruments. Finally, we illustrate its use in an EHR-based case study.

\section{Causal Framework}\label{sec:causal}

\begin{figure}
\centering
\scalefont{1}
\tikzset{state/.style={draw,circle}}
\begin{tikzpicture}
\node at (-1.95,2.5) {\Large \textbf{A)}};
\begin{scope}[yshift = 0]
     \node[state] (x) at (1,2.5) {$X$};
    \node[state] (a) at (0,0) {$A$};
    \node[state] (y) at (2,0) {$Y$};
    \node[state] (u) at (1,1) {$U$};
    \node[state] (z) at (-1,1) {$Z$};
    \draw[very thick] (u.south west) -- (a.north east);
    \draw[very thick] (u.south east) -- (y.north west);
    \draw[very thick] (a.east) -- (y.west);
    \draw[very thick] (u.west)--(z.east);
    \draw[very thick] (x.south) -- (u.north);
    \draw[very thick] (x.south west) -- (a.north);
    \draw[very thick] (x.south east) -- (y.north);
    \draw[very thick] (x.west) -- (z.north east);
\end{scope}
\begin{scope}[xshift=0, yshift=-100]
    \node at (-1.95,2.5) {\Large \textbf{B)}};
    \begin{scope}
    \node[state] (x) at (1,2.5) {$X$};
    \node[state] (a) at (0,0) {$A$};
    \node[state] (y) at (2,0) {$Y$};
    \node[state] (u) at (1,1) {$U$};
    \node[state] (z) at (-1,1) {$Z$};
    \draw[very thick] (u.south west) -- (a.north east);
    \draw[very thick] (u.south east) -- (y.north west);
    \draw[very thick] (a.east) -- (y.west);
    \draw[very thick] (u.west)--(z.east);
    \draw[very thick] (z.south east)--(a.north west);
    \draw[very thick] (x.south) -- (u.north);
    \draw[very thick] (x.south west) -- (a.north);
    \draw[very thick] (x.south east) -- (y.north);
    \draw[very thick] (x.west) -- (z.north east);
    \end{scope}
    \begin{scope}[xshift=125]
    \node[state] (x) at (1,2.5) {$X$};
    \node[state] (a) at (0,0) {$A$};
    \node[state] (y) at (2,0) {$Y$};
    \node[state] (u) at (1,1) {$U$};
    \node[state] (z) at (-1,1) {$Z$};
    \draw[very thick] (u.south west) -- (a.north east);
    \draw[very thick] (u.south east) -- (y.north west);
    \draw[very thick] (a.east) -- (y.west);
    \draw[very thick] (z.south east)--(a.north west);
    \draw[very thick] (x.south) -- (u.north);
    \draw[very thick] (x.south west) -- (a.north);
    \draw[very thick] (x.south east) -- (y.north);
    \draw[very thick] (x.west) -- (z.north east);
    \end{scope}
    \begin{scope}[yshift=-125]
    \node[state] (x) at (1,2.5) {$X$};
    \node[state] (a) at (0,0) {$A$};
    \node[state] (y) at (2,0) {$Y$};
    \node[state] (u) at (1,1) {$U$};
    \node[state] (z) at (-1,1) {$Z$};
    \draw[very thick] (u.south west) -- (a.north east);
    \draw[very thick] (u.south east) -- (y.north west);
    \draw[very thick] (a.east) -- (y.west);
    \draw[very thick] (z.east)--(u.west);
    \draw[very thick] (x.south) -- (u.north);
    \draw[very thick] (x.south west) -- (a.north);
    \draw[very thick] (x.south east) -- (y.north);
    \draw[very thick] (x.west) -- (z.north east);
    \end{scope}
    \begin{scope}[yshift=-125,xshift=125]
    \node[state] (x) at (1,2.5) {$X$};
    \node[state] (a) at (0,0) {$A$};
    \node[state] (y) at (2,0) {$Y$};
    \node[state] (u) at (1,1) {$U$};
    \node[state] (z) at (-1,1) {$Z$};
    \draw[very thick] (u.south west) -- (a.north east);
    \draw[very thick] (u.south east) -- (y.north west);
    \draw[very thick] (a.east) -- (y.west);
    \draw[very thick] (z.east)--(u.west);
    \draw[very thick] (z.south east)--(a.north west);
    \draw[very thick] (x.south) -- (u.north);
    \draw[very thick] (x.south west) -- (a.north);
    \draw[very thick] (x.south east) -- (y.north);
    \draw[very thick] (x.west) -- (z.north east);
    \end{scope}
\end{scope}
\end{tikzpicture}
\caption{DAGs describing unmeasured confounding: A) with a general proxy variable, B) with alternative causal models that satisfy Corollary \ref{corr:proof}.}
\label{fig:dag2}
\end{figure}

We consider the following random variables:  measured confounders $X \in \mathbb{R}^l$; unmeasured confounders $U\in\mathbb{R}^k$; proxy variables $Z\in\mathbb{R}^p$; treatment assignment $A\in\mathcal{A}$, typically with $\A=\R_{\geq0}$ or $\A=\{0,1\}$; and an outcome $Y\in\mathbb{R}$. 
These variables capture events related to a patient's visit to a healthcare provider: Each patient visit has some measured confounders $X$ (like age or sex) along with unmeasured confounders $U$, such as their needs for treatment and monitoring.  The provider collects observations $Z$ of the patient's health, such as their acuity level or imaging results.  Here, $Z$ is assumed to carry imperfect information about $U$ so that $Z$ acts as a proxy for $U$. Importantly, $Z$ contains only the observations that will eventually be available in EHR. After collecting information about the patient, including the information not recorded in $Z$, the provider assigns treatment $A$. Possible treatments include a medication dosage or an admission into the hospital. It is assumed that $U$ can be defined so that the treatment assignment depends on $Z$ only via its relation to the latent variable $U$. After treatment is assigned, the patient is followed up for some time period, and an outcome $Y$ such as an adverse health event is observed. 

We embed these variables into a non-parametric structural equation model (NPSEM) that consists of independent random noises $\{\varepsilon_X,\varepsilon_U,\varepsilon_Z,\varepsilon_A,\varepsilon_Y\}$ and deterministic ``assignment" functions $\{f_X,f_U,f_Z,f_A,f_Y\}$ such that our random variables are assigned iteratively:
\begin{align*}
    X &:= f_X(\varepsilon_X),&&&
    U &:= f_U(X, \varepsilon_U),\\
    Z &:= f_Z(X, U, \varepsilon_Z), &&&
    A &:= f_A(X, U, \varepsilon_A),\\
    Y &:= f_Y(X, A, U, \varepsilon_Y).
\end{align*}
One practical benefit of NPSEMs is their independence assumptions can be represented concisely in a DAG; each node is a variable and a directed edge exists from one variable to another if the former is an argument in the assignment of the latter. This NPSEM results in Figure \ref{fig:dag2}A.  We later relax these structural assumptions.

Our NPSEM defines potential outcomes, following the single world intervention graph (SWIG) framework introduced by Richardson and Robins.\cite{richardson2013single} This framework offers us the versatility of potential outcomes as described by Neyman\cite{neyman1923application} and Rubin,\cite{rubin1974estimating} with the precision and graphical aids of do-calculus described by Pearl et al.\cite{pearl2016causal} 
The idea is to replace the observed intervention $A$ with the fixed intervention $a\in\A$ in all subsequent assignments. This procedure allows us to iteratively define a new set of variables:
\begin{align*}
    X(a) &:= f_X(\varepsilon_X),
    &&& U(a) &:= f_U(X(a),\varepsilon_U),
    \\Z(a) &:= f_Z(X(a),U(a), \varepsilon_Z), &&&
    A(a) &:= f_A(X(a),U(a), \varepsilon_A),\\
    Y(a) &:= f_Y(X(a),a, U(a), \varepsilon_Y).
\end{align*}
The notation ``$(a)$" serves to mark that these new variables result from fixing $a$ may not be the originally observed variables. 
Our assumptions allow us to immediately recover the relations: $X(a)=X$, $U(a)=U$, $Z(a)=Z$, and $A(a)=A$.
In the SWIG framework, these relations follow from the principle of causal irrelevance.\cite{richardson2013single} Accordingly, we drop ``$(a)$" for variables other than $Y(a)$ from subsequent exposition. 

\section{Identification}\label{sec:identification}
With the potential outcomes defined, our goal is to use the observed variables to estimate the conditional average treatment effect (CATE): $$\beta(Z,X):=\mathbb{E}[Y(a_1)-Y(a_0)|Z,X]$$
for $a_1,a_0 \in \cal A$. The CATE tells us the expected difference in outcome if someone with $Z$ and $X$ is given treatment $a_1$ instead of treatment $a_0$.  By conditioning on $Z$ and $X$, we are able to gauge the effect of the intervention on a specific group with $Z$ and $X$, instead of in the population as a whole.  This is particularly useful in applications where treatment effects may vary across subgroups. Estimating the CATE allows for more targeted inference.

Learning about the CATE is challenging because we only observe what occurs when a patient is assigned to one treatment. We cannot directly estimate the CATE without further manipulation.  Identification is the process by which we re-express our causal identity of interest, in our case the CATE $\beta(Z,X)$, in terms of the observed data. 

\subsection{Why direct adjustment fails}

It is useful to appreciate how a strategy that adjusts for proxies can go wrong. For simplicity, let $\A = \{0,1\}$ and omit $X$. Imagine we wrongly assume the treatment assignment $A$ and outcome $Y$ are influenced directly by proxies $Z$ rather than $U$.  We would then identify the CATE with the expression:
$$\E[Y|Z, A=1] - \E[Y|Z,A=0],$$
i.e. subtracting the mean outcome for people who did not receive treatment from the mean outcome for people who did receive treatment with the same characteristics $Z$. Suppose further that $Y$ is generated via a simple additive noise model:
$$f_Y(a,u,\varepsilon_Y) = \varepsilon_Y + g_0(u) + a g_1(u),$$
where $g_0$ and $g_1$ are arbitrary and $\E[\varepsilon_Y] = 0$. The only additional assumption here is the additivity of noise. In this case, $Y(1) - Y(0) = g_1(U)$, so the true CATE is simply $\E[g_1(U)|Z]$.

Now consider what the expression $\E[Y | Z, A=1] - \E[Y | Z, A=0]$ actually estimates under this model. It differs from $\E[g_1(U) | Z]$ and instead equals:
\begin{align*}
\E[g_1(U) | Z, A=1] + \left(\E[g_0(U) | Z, A=1] - \E[g_0(U) | Z, A=0]\right).
\end{align*}
Its bias can be written compactly as:
\begin{align*}
\E[h(U, Z) | Z, A=1] - \E[h(U, Z) | Z, A=0],
\end{align*}
where $h(U, Z) = g_0(U) + \PR(A = 0 | Z) g_1(U)$. In general, this bias does not vanish, except under special conditions like $h(U, Z) \indep A \mid Z$.

To make this more concrete, suppose the outcome model is affine in the unobserved variable $U$. Specifically, let
\begin{align*}
g_0(u) = c_0 + c_1' u; &&& g_1(u) = c_2 + c_3'u.
\end{align*}
so the effect of $U$ on $Y$ enters through linear terms. Under this assumption, the bias that arises from adjusting on $Z$ simplifies to:
$$(c_1 + \PR(A=0|Z) c_3 )'\left(\E[U|Z,A=1]-\E[U|Z,A=0]\right).$$
This expression highlights two key drivers of bias: (1) how much $U$ affects the outcome, and (2) how imbalanced $U$ is between treated and untreated individuals with the same value of $Z$. The bias is zero exactly when $c_1 + \PR(A=0|Z) c_3$ and $\E[U|Z,A=1]-\E[U|Z,A=0]$ are orthogonal---roughly speaking, when the direction in which $U$ affects the outcome is perpendicular to the direction in which $U$ is imbalanced across treatment groups.
In particular, bias is small if either $U$ has little or no effect on the outcome ($c_1$ and $c_3$ are small), or the treatment assignment does not tell you anything more about $U$ beyond what is already captured in $Z$ ($\E[U|Z,A=1]-\E[U|Z,A=0]$ is small). Put simply, the bias will be small when the amount of confounding between $A$ and $Y$ that comes from $U$ is small, either because $U$ has little effect on the outcome, or its influence is well-approximated by $Z$.

\subsection{A general strategy for identification}

We have seen that simply adjusting for the proxy variable $Z$ does not recover the CATE, so we need an alternative identification strategy. We begin by outlining the most general version of our strategy, before turning to the practical implementation we use. We rely on two assumptions. First, although $U$ is unobserved, the data still contain information about it. We assume this information is enough to recover certain functions of $U$ in expectation:

\begin{assm} \label{assm:identify}
    There is a vector-valued function $g(U,X)$ so that $\E[g(U,X)|Z,A,X]$ can be identified from the observed variables $(Z,A,X)$.
\end{assm}

This assumption is intentionally flexible. We do not need to recover $U$ itself, just functions of $U$ in expectation. Here, \(\E[g(U,X)\mid Z,A,X]\) plays this role: it depends only on observed variables, but still reflects confounding. In practice, it means we must relate the proxies $Z$ to the latent variable $U$. Many standard models do exactly this by treating $Z$ as noisy measurements of some underlying latent structure. Under this view, methods such as Gaussian mixture models, latent class analysis, or factor analysis provide a way to recover functions of $U$ from $Z$. In Section~\ref{sub:parametric}, we use factor analysis, a standard tool in medical research with a well-developed theoretical background.\cite{classical_latent}

Second, we require these functions are expressive enough to capture how the expected potential outcomes vary with treatment and covariates.  This is essentially a working model for the potential outcome.  That is:
\begin{assm}\label{assm:express}
    The causal quantity $\mathbb{E}[Y(a)|U,X]$ can be written as $\tau(a,X)' g(U,X)$ for some vector-valued function $\tau(a,X)$ and each $a \in \A$.
\end{assm}
This can be viewed as a separability condition on the functional form of the causal quantity $\E[Y(a)|U,X]$.
Combined, these assumptions provide a general strategy for identification: 
\begin{theorem}\label{thm:identify}
    Assume Assumptions~\ref{assm:identify}--\ref{assm:express} hold with vector-valued functions $\tau$ and $g$ defined therein. Then as long as $\tau(A,X)$ is the unique solution to:
    $$\mathbb{E}[Y|Z,A,X]=\tau(A,X)'\mathbb{E}[g(U,X)|Z,A,X],$$ 
    we can identify the CATE by way of
    $$\beta(Z,X)=(\tau(a_1,X)-\tau(a_0,X))'\mathbb{E}[g(U,X)|Z,X].$$
\end{theorem}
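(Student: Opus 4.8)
The plan is to manipulate the working model for $\E[Y(a)\mid U,X]$ given by Assumption~\ref{assm:express}, pass to the conditional expectation given $(Z,X)$, and match it against the regression equation that identifies $\tau$. First I would observe that, by iterated expectations applied to the NPSEM, for any $a \in \A$,
\begin{align*}
\E[Y(a)\mid Z,X] = \E\big[\,\E[Y(a)\mid U,X,Z]\,\big|\,Z,X\big] = \E\big[\,\E[Y(a)\mid U,X]\,\big|\,Z,X\big],
\end{align*}
where the second equality uses that, conditional on $(U,X)$, the potential outcome $Y(a) = f_Y(X,a,U,\varepsilon_Y)$ depends only on the exogenous noise $\varepsilon_Y$, which is independent of $Z$ (this is the SWIG/NPSEM structure laid out in Section~\ref{sec:causal}). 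Substituting the representation $\E[Y(a)\mid U,X] = \tau(a,X)'g(U,X)$ from Assumption~\ref{assm:express} and pulling the $(Z,X)$-measurable factor $\tau(a,X)$ outside the inner expectation gives
\begin{align*}
\E[Y(a)\mid Z,X] = \tau(a,X)'\,\E[g(U,X)\mid Z,X].
\end{align*}
Taking the difference at $a_1$ and $a_0$ then yields exactly $\beta(Z,X) = (\tau(a_1,X)-\tau(a_0,X))'\E[g(U,X)\mid Z,X]$, so the remaining work is to argue that the $\tau$ appearing here is the same object identified by the stated regression equation.

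For that, I would run the same argument one conditioning level finer, now conditioning on $(Z,A,X)$ rather than $(Z,X)$. The subtlety is that $Y$ (not $Y(a)$) is what we observe, so I need the consistency relation $Y = Y(A)$ together with the fact that, given $(U,X)$, the treatment $A$ and the outcome noise $\varepsilon_Y$ are independent (again from the NPSEM: $A := f_A(X,U,\varepsilon_A)$ with $\varepsilon_A \indep \varepsilon_Y$). These give $\E[Y\mid U,A,X] = \E[Y(A)\mid U,A,X] = \E[Y(a)\mid U,X]\big|_{a=A} = \tau(A,X)'g(U,X)$. Then iterated expectations over $(Z,A,X)$, pulling out the $(Z,A,X)$-measurable $\tau(A,X)$, gives
\begin{align*}
\E[Y\mid Z,A,X] = \tau(A,X)'\,\E[g(U,X)\mid Z,A,X].
\end{align*}
So $\tau(A,X)$ is \emph{a} solution to the identifying equation in the theorem statement; by the hypothesis that it is the \emph{unique} solution, any quantity recovered from the observed data by solving that equation must coincide with $\tau(A,X)$. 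Combined with Assumption~\ref{assm:identify}, which guarantees $\E[g(U,X)\mid Z,A,X]$ is itself identified, every ingredient of the right-hand side is a function of the observed distribution, so $\tau$ is identified; evaluating at $A=a_1$ and $A=a_0$ and plugging into the display from the first paragraph completes the identification of $\beta(Z,X)$.

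The main obstacle I anticipate is bookkeeping around the conditional-independence steps rather than anything deep: one has to be careful that $\E[Y(a)\mid U,X,Z] = \E[Y(a)\mid U,X]$ and $\E[Y\mid U,A,X,Z]=\E[Y\mid U,A,X]$ genuinely follow from the NPSEM factorization (they do, because $Z := f_Z(X,U,\varepsilon_Z)$ and $\varepsilon_Z$ is independent of $\varepsilon_Y$, so $Z$ adds no information about $Y(a)$ beyond $(U,X)$), and that the "pull out the measurable factor" move is legitimate at each conditioning set — it is, since $\tau(a,X)$ is deterministic in $X$ and $\tau(A,X)$ is $(Z,A,X)$-measurable. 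A secondary point worth stating carefully is \emph{where} uniqueness is actually used: it is not needed to show $\beta(Z,X)$ has the claimed form, but it is needed to conclude that the specific $\tau$ one obtains by solving the regression equation on observed data is the right one, so I would flag that the uniqueness hypothesis is exactly what turns "a solution" into "the identified functional."
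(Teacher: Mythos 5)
Your proof is correct and takes essentially the same route as the paper's, just with the two halves in reverse order: the paper first derives the regression equation $\E[Y|Z,A,X]=\tau(A,X)'\E[g(U,X)|Z,A,X]$ via iterated expectation, consistency, and the NPSEM conditional independences, and then derives $\E[Y(a)|Z,X]=\tau(a,X)'\E[g(U,X)|Z,X]$, exactly as you do. The only point you leave implicit is that the final CATE formula requires identification of $\E[g(U,X)|Z,X]$ rather than the quantity $\E[g(U,X)|Z,A,X]$ covered directly by Assumption~\ref{assm:identify}; the paper closes this with the one-line observation $\E[g(U,X)|Z,X]=\E\big[\E[g(U,X)|Z,A,X]\,\big|\,Z,X\big]$.
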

We include the proof in Appendix \ref{ap:idproofs}. This theorem outlines a two-step strategy for identifying the CATE. The first step is to target a lower-dimensional summary of $U$, captured by the conditional expectation $\E[g(U, X) | Z, A, X]$. This quantity is both a function of the observed variables and can be recovered from them. Once we have this quantity, we treat it as a predictor and use it to model the conditional mean of the outcome $Y$ by learning the function $\tau(A, X)$ in the expression $\E[Y | Z, A, X] = \tau(A, X)'\E[g(U, X) | Z, A, X].$ This is done using regression. With $\tau(A, X)$ recovered, we then compute the CATE.

We can relax the structural assumptions, allowing for greater flexibility in underlying models, including the possibility of $Z$ directly influencing $A$ or $A$ directly influencing $Z$:

\begin{corr}\label{corr:proof}
    Suppose Assumptions~\ref{assm:identify}–\ref{assm:express} hold, and the NPSEM corresponds to one of the DAGs in Figure~\ref{fig:dag2}B, with potential outcomes defined accordingly. Then the conclusions of Theorem~\ref{thm:identify} hold.
\end{corr}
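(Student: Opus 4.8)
The plan is to inspect the proof of Theorem~\ref{thm:identify} in Appendix~A and observe that it never invokes the full DAG of Figure~\ref{fig:dag2}A, only three of its consequences, each of which I claim persists under the modified generative models of Figure~\ref{fig:dag2}B. The three ingredients are: \textbf{(I)} the intervention on $A$ leaves $X$, $U$, and $Z$ unchanged, so that $Y(a) = f_Y(X, a, U, \varepsilon_Y)$; \textbf{(II)} the conditional independence $Y(a) \indep (A, Z) \mid (U, X)$; and \textbf{(III)} consistency, $Y = Y(A)$. Granting (I)--(III), the derivation is mechanical and graph-free: by (III) and the tower property, $\E[\,Y \mid Z, A = a, X\,] = \E\bigl[\, \E[\,Y(a) \mid U, Z, A = a, X\,] \,\big|\, Z, A = a, X \,\bigr]$; by (II) the inner expectation equals $\E[Y(a) \mid U, X]$, which Assumption~\ref{assm:express} writes as $\tau(a, X)' g(U, X)$; hence $\E[Y \mid Z, A, X] = \tau(A, X)'\,\E[g(U, X) \mid Z, A, X]$, and---still under the hypothesis that $\tau(A,X)$ is the unique solution to this equation---$\tau$ is identified through the quantity supplied by Assumption~\ref{assm:identify}. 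Running the same two steps without conditioning on $A$ gives $\E[Y(a) \mid Z, X] = \tau(a, X)'\,\E[g(U, X) \mid Z, X]$, and differencing the $a_1$ and $a_0$ versions yields the stated expression for $\beta(Z,X)$, with $\E[g(U,X)\mid Z,X]$ identified by averaging Assumption~\ref{assm:identify}'s quantity over the observed law of $A \mid Z, X$.

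It then suffices to verify (I)--(III) in each of the four DAGs of Figure~\ref{fig:dag2}B. For (I): in all four, the only edge out of $A$ is $A \to Y$, so $A \notin \mathrm{an}(X) \cup \mathrm{an}(U) \cup \mathrm{an}(Z)$; the principle of causal irrelevance in the SWIG construction then yields $X(a) = X$, $U(a) = U$, $Z(a) = Z$ exactly as in Section~\ref{sec:causal}, whence $Y(a) = f_Y(X, a, U, \varepsilon_Y)$---the generating functions $f_U$ or $f_Z$ may pick up an extra argument from the rewired $Z \to U$ or dropped $U \to Z$ edges, but $f_Y$ is unchanged and this is immaterial. For (II): $\varepsilon_Y$ is one of the mutually independent exogenous noises, and in each of the four graphs $X, U, Z, A$ are measurable functions of the remaining noises $\{\varepsilon_X, \varepsilon_U, \varepsilon_Z, \varepsilon_A\}$ only---the new arrows $Z \to A$ and $Z \to U$ merely re-route dependence among these variables and never feed $\varepsilon_Y$ into any of them---so $\varepsilon_Y \indep (X, U, Z, A)$, hence $\varepsilon_Y \indep (A, Z) \mid (U, X)$, which is (II) since $Y(a)$ is a deterministic function of $\varepsilon_Y$ once $X, U$ are fixed. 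For (III): $A(a) = A$ by causal irrelevance, so substituting $a = A$ into the definition of $Y(a)$ returns $Y$. Feeding (I)--(III) into the template above closes each case.

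The step carrying the real content is (I): what must be checked graph-by-graph is that none of the rewired edges turns $Z$---the variable we condition on, and through which $\E[g(U,X)\mid Z,X]$ is recovered---into a descendant of $A$, and likewise that $U$ is not a descendant of $A$; if either failed, $Z(a) \ne Z$ or $U(a) \ne U$ and the reduction would collapse. The reason the corollary restricts to exactly the graphs in Figure~\ref{fig:dag2}B is that in all of them $A$ stays a sink apart from $A \to Y$ while $Z$ and $U$ remain upstream of, or parallel to, $A$; everything else is bookkeeping. I would also stress in the write-up that we do \emph{not} re-derive identifiability of $\E[g(U,X)\mid Z,A,X]$ for the new graphs---that is precisely Assumption~\ref{assm:identify}, imposed as a hypothesis---so the corollary's only genuine assertion is that the passage from $\beta(Z,X)$ down to the identified quantities, and the outcome regression that recovers $\tau$, go through verbatim under the relaxed generative assumptions.
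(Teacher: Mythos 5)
Your proposal is correct and follows essentially the same route as the paper, which likewise observes that the proof of Theorem~\ref{thm:identify} uses only consistency and the conditional independence of $Y(a)$ from $(Z,A)$ given $(U,X)$, and that these persist in every DAG of Figure~\ref{fig:dag2}B. If anything you are slightly more careful than the paper's stated justification: you verify the \emph{joint} independence $Y(a)\indep(A,Z)\mid(U,X)$ via the exogenous-noise argument (which is what the iterated-expectation step actually requires), whereas the paper records only the two marginal independences $Y(a)\indep Z\mid U,X$ and $Y(a)\indep A\mid U,X$.
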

Each NPSEM in Figure~\ref{fig:dag2}B, together with its associated potential outcomes, implies the independence relations: $Y(a) \indep Z \mid U, X$ and $Y(a) \indep A \mid U, X$. These independence conditions are used to establish conclusions analogous to those in Theorem~\ref{thm:identify} across all the depicted causal structures. This shows that our strategy can tolerate ambiguity in structural assumptions, which is often unavoidable in practice. Even if we cannot determine which NPSEM generated the data, identification still holds as long as one of the models in Figure~\ref{fig:dag2}B is valid. This is especially relevant in applied settings, where a variable assumed to be a proxy might instead be a direct cause of $U$, or an instrument. Our strategy accommodates these distinctions without requiring precise knowledge of each variable’s structural role, making it useful when such details are difficult to verify. Additionally, notice that this allows most pretreatment measures, which are widely available, to act as proxies.

\subsection{A specific, parametric strategy for identification}\label{sub:parametric}

Our ability to gather information about the latent variable (Assumption~\ref{assm:identify}) forms the cornerstone of our identification strategy. Here, we outline how to identify expectations of the form $\E[g(U,X)|Z,A,X]$ by imposing stronger assumptions on the variables $U$, $Z$, $A$, and $X$. Notably, the outcome $Y$ is not included in these assumptions. Our task later will be to show that our final estimate of the CATE is relatively insensitive to these additional assumptions. We consider the multiple indicators multiple causes (MIMIC) model \cite{tekwe2014multiple, chang2020comparing}, an extension of factor analysis that incorporates covariates:

\begin{assm}\label{assm:parametric}
Assume the following parametric model: First, $\varepsilon_X$, $\varepsilon_U$, $\varepsilon_Z$, and $\varepsilon_A$ are all independent multivariate standard normals. Second, assignment functions are given by:
\begin{align*}
f_X(\varepsilon_X) &=\varepsilon_X \Theta^{1/2}
&&& f_U(\varepsilon_U) &= \Gamma X + \varepsilon_U, \\
f_Z(U,\varepsilon_Z) &= \Lambda U + \nu + \Psi^{1/2} \varepsilon_Z;
&&& f_A(U,\varepsilon_A) &= b'U + c + \varepsilon_A,
\end{align*}
where $\Psi \in \mathbb{R}^{p \times p}$ is diagonal and positive definite; $\Lambda \in \mathbb{R}^{p \times k}$ and $\Gamma \in\mathbb{R}^{k\times m}$ are full rank; and $\nu \in \mathbb{R}^p$, $c \in \mathbb{R}$, and $b \in \mathbb{R}^k$, with $k<p$.
\end{assm}

As discussed in Tekwe et al.\cite{tekwe2014multiple} and Chang et al.,\cite{chang2020comparing} parametric models of this form identify the parameters only up to an orthogonal transformation of the latent factor $U$. In applied settings, this indeterminacy is typically resolved by imposing a constraint or applying a rotation post hoc to select among models with equivalent likelihoods. For simplicity, we adopt a standard identifying constraint from the factor analysis literature (though other constraints would serve equally well in what follows):

\begin{assm}\label{assm:posdef}
    Loading matrix $\Lambda$ is lower triangular with positive diagonal entries.
\end{assm}

Under this constraint, we now state an identifiability result:

\begin{lemma} \label{lem:par_ident}
Given Assumptions~\ref{assm:parametric} and~\ref{assm:posdef}, the distribution of $U|Z,A,X$ can be identified.\cite{chang2020comparing}  Therefore, $\E[g(U,X) | Z, A, X]$ can be identified from the observed variables $Z$, $A$, and $X$ for any vector-valued function $g$.
\end{lemma}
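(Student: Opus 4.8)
The plan is to exploit that Assumption~\ref{assm:parametric} makes $(X,U,Z,A)$ a linear Gaussian model: reduce the claim to identifying a few second moments involving the latent $U$, extract everything except the factor loadings/uniquenesses directly from the observed covariance, and then close the gap with the classical identifiability of the factor-analytic part (where Assumption~\ref{assm:posdef} does its work).

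First I would note that, under Assumption~\ref{assm:parametric}, the vector $(X,U,Z,A)$ is jointly multivariate normal, being built from independent Gaussian noises by successive affine maps. Hence $(Z,A,X)$ is multivariate normal and the conditional law of $U$ given $(Z,A,X)$ is multivariate normal with mean affine in $(Z,A,X)$ and covariance not depending on $(Z,A,X)$. Since $X$ lies in the conditioning set,
$$\E[g(U,X)\mid Z,A,X]=\int g(u,X)\,\mathcal{N}\!\bigl(du;\,\E[U\mid Z,A,X],\,\mathrm{Cov}(U\mid Z,A,X)\bigr)$$
is a fixed measurable function of $(Z,A,X)$ whenever this expectation exists. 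So it suffices to identify the conditional law $\mathcal{L}(U\mid Z,A,X)$; by the Gaussian conditioning formulas, and because $\E[U]=0$ while $\E[(Z,A,X)]$ and $\mathrm{Cov}(Z,A,X)$ are directly observed (the latter nonsingular, as $\Psi$ is positive definite, $\mathrm{Var}(\varepsilon_A)=1$, and $\Theta$ may be taken positive definite), this reduces further to identifying $\mathrm{Cov}\!\bigl(U,(Z,A,X)\bigr)$ and $\mathrm{Cov}(U)$.

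Next I would write those unobserved moments in terms of the parameters. With $\Phi:=\mathrm{Cov}(U)=\Gamma\Theta\Gamma'+I_k$, one has $\mathrm{Cov}(U,X)=\Gamma\Theta$, $\mathrm{Cov}(U,Z)=\Phi\Lambda'$, and $\mathrm{Cov}(U,A)=\Phi b$, so everything reduces to identifying $\Theta$, $\Gamma$, $\Lambda$, and $b$. From the observed covariance of $(Z,A,X)$ I can read off $\Theta=\mathrm{Cov}(X)$, the product $\Lambda\Gamma=\mathrm{Cov}(Z,X)\,\Theta^{-1}$, the vector $\Lambda b=\mathrm{Cov}(Z,A)-(\Lambda\Gamma)\,\mathrm{Cov}(X,A)$, and the matrix $\Lambda\Lambda'+\Psi=\mathrm{Cov}(Z)-(\Lambda\Gamma)\,\Theta\,(\Lambda\Gamma)'$, each obtained by expanding $\mathrm{Cov}(Z,X)=\Lambda\Gamma\Theta$, $\mathrm{Cov}(Z,A)=\Lambda\Phi b$, $\mathrm{Cov}(Z)=\Lambda\Phi\Lambda'+\Psi$ and substituting $\Phi=\Gamma\Theta\Gamma'+I_k$.

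The hard part is the remaining factor-analytic step: separating $\Lambda$ and $\Psi$ from the single known matrix $\Lambda\Lambda'+\Psi$, using that $\Psi$ is diagonal and positive definite and, by Assumption~\ref{assm:posdef}, that $\Lambda$ is lower triangular with positive diagonal and full column rank $k<p$. This is precisely the classical identifiability of a triangular factor model, and it is where the rotational indeterminacy noted after Assumption~\ref{assm:parametric} gets pinned down; I would invoke the MIMIC identification results of \citet{tekwe2014multiple, chang2020comparing} here rather than reprove it. Once $\Lambda$ is identified, its full column rank gives $\Gamma=(\Lambda'\Lambda)^{-1}\Lambda'(\Lambda\Gamma)$ and $b=(\Lambda'\Lambda)^{-1}\Lambda'(\Lambda b)$, hence $\Phi=\Gamma\Theta\Gamma'+I_k$, and hence $\mathrm{Cov}(U,X)$, $\mathrm{Cov}(U,Z)$, $\mathrm{Cov}(U,A)$, $\mathrm{Cov}(U)$; substituting into the Gaussian conditioning formulas identifies $\E[U\mid Z,A,X]$ and $\mathrm{Cov}(U\mid Z,A,X)$, and integrating $g(\cdot,X)$ against the resulting normal law finishes the proof. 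Equivalently, one could argue in a line that under Assumptions~\ref{assm:parametric}–\ref{assm:posdef} the map $(\Theta,\Gamma,\Lambda,\Psi,\nu,b,c)\mapsto\mathcal{L}(Z,A,X)$ is injective, so the entire joint law of $(X,U,Z,A)$ — in particular $\mathcal{L}(U\mid Z,A,X)$ — is identified; either way, the crux is the same factor-model identification.
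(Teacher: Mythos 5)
Your proposal is correct and takes essentially the same route as the paper: both arguments rest on the joint Gaussianity of $(X,U,Z,A)$, on identification of the MIMIC parameters from the observed distribution under the triangular-loading constraint of Assumption~\ref{assm:posdef} (delegated, as you do, to the cited factor-analysis literature), and on the resulting closed-form multivariate normal conditional of $U$ given $(Z,A,X)$ against which $g(\cdot,X)$ is integrated. The only cosmetic difference is that the paper's Appendix~B derives $U\mid Z,A,X$ by completing the square in the joint density, yielding a precision-matrix form $M^{*}$, whereas you use the covariance-based Gaussian conditioning formula and are somewhat more explicit about reading the identified products $\Lambda\Gamma$, $\Lambda b$, and $\Lambda\Lambda'+\Psi$ off the observed second moments.
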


This lemma implies that Assumption~\ref{assm:identify} is satisfied. Note also that Assumption~\ref{assm:express} is trivially satisfied, since this lemma puts no restrictions on what functions $g$ can used to capture $\E[Y(a)|U,X]$. Therefore, the CATE can be identified as described above:

\begin{corr}\label{corr:rotation}
     Suppose the potential outcomes are defined, and 
 Assumptions~\ref{assm:parametric} and \ref{assm:posdef} hold.
    We can identify the CATE using the procedure outlined in Theorem~\ref{thm:identify}.
\end{corr}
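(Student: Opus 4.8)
The plan is to check that Assumptions~\ref{assm:parametric} and~\ref{assm:posdef} put us in position to invoke Theorem~\ref{thm:identify}, and then simply read off the CATE. First I would observe that Assumption~\ref{assm:parametric} constrains only $f_X,f_U,f_Z,f_A$ and leaves $f_Y(X,A,U,\varepsilon_Y)$ completely general, so the induced NPSEM is exactly the DAG of Figure~\ref{fig:dag2}A; the potential outcomes $Y(a)$ are therefore defined as in Section~\ref{sec:causal}, and the corresponding SWIG delivers the d-separations $Y(a)\indep Z\mid U,X$ and $Y(a)\indep A\mid U,X$ on which Theorem~\ref{thm:identify} (and Corollary~\ref{corr:proof}) rests.

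Next I would fix a vector-valued function $g(U,X)$ rich enough that the working model of Assumption~\ref{assm:express} holds, i.e.\ $\E[Y(a)\mid U,X]=\tau(a,X)'g(U,X)$ for some $\tau(a,X)$; for example, if $\E[Y(a)\mid U,X]$ is affine in $U$ one takes $g(U,X)=(1,U')'$, possibly augmented with functions of $X$, and in general a polynomial or spline basis in $(U,X)$. With this $g$, Assumption~\ref{assm:express} holds by construction, and Assumption~\ref{assm:identify} holds for the \emph{same} $g$ by Lemma~\ref{lem:par_ident}, which imposes no restriction on $g$. Hence both structural hypotheses of Theorem~\ref{thm:identify} are satisfied.

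The remaining hypothesis of Theorem~\ref{thm:identify}---that $\tau(A,X)$ be the \emph{unique} solution of $\E[Y\mid Z,A,X]=\tau(A,X)'\E[g(U,X)\mid Z,A,X]$---is where the real work lies, and I expect it to be the main obstacle. Writing $\phi(Z,A,X):=\E[g(U,X)\mid Z,A,X]$, uniqueness is equivalent to nonsingularity of the conditional Gram matrix $M(A,X):=\E[\phi(Z,A,X)\phi(Z,A,X)'\mid A,X]$. Under Assumption~\ref{assm:parametric} the vector $(U,Z,A,X)$ is jointly Gaussian with parameters identified (up to the constraint of Assumption~\ref{assm:posdef}) in the proof of Lemma~\ref{lem:par_ident}, so $\phi$ is an explicit function of $(Z,A,X)$---affine in $(Z,A)$ when $g$ is affine in $U$---with coefficients built from $\Lambda,\Psi,\Gamma,\Theta,b,c,\nu$. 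I would then argue that the full-rank conditions on $\Lambda$ and $\Gamma$ together with the positive-definiteness of $\Psi$ and $\Theta$ force $Z$ to retain nondegenerate variation in every latent direction conditional on $(A,X)$, so that $M(A,X)\succ0$ for a suitable (non-degenerate) choice of $g$; concretely, one includes the constant and the coordinates of the identified conditional mean $\E[U\mid Z,A,X]$ among the entries of $g$. This positive-definiteness computation is the only place the rank and positivity hypotheses of Assumption~\ref{assm:parametric} are genuinely used; everything else is bookkeeping.

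Finally, with uniqueness established, Theorem~\ref{thm:identify} gives $\beta(Z,X)=(\tau(a_1,X)-\tau(a_0,X))'\E[g(U,X)\mid Z,X]$, and it remains to note that $\E[g(U,X)\mid Z,X]$ is itself identified: either directly from the identified Gaussian law of $(U,Z,X)$, or via the tower property $\E[g(U,X)\mid Z,X]=\E[\phi(Z,A,X)\mid Z,X]$, where the inner term is identified by Lemma~\ref{lem:par_ident} and the outer expectation is taken over the observed conditional distribution of $A$ given $(Z,X)$. That closes the argument.
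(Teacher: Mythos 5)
Your proposal is correct and takes essentially the same route as the paper: Lemma~\ref{lem:par_ident} supplies Assumption~\ref{assm:identify} for any $g$, Assumption~\ref{assm:express} is arranged by choosing $g$ to span $\E[Y(a)\mid U,X]$ (the paper calls this ``trivially satisfied''), and Theorem~\ref{thm:identify} then yields the CATE, with $\E[g(U,X)\mid Z,X]$ recovered via the tower property exactly as in the paper's proof of that theorem. Your Gram-matrix verification of the uniqueness hypothesis goes beyond what the paper records --- the paper simply carries that condition along as part of ``the procedure'' --- and it is sound for $g(U,X)=(1,U')'$, where nonsingularity follows from $\Lambda$ having full rank and $Z$ retaining positive-definite conditional covariance given $(A,X)$; the only slip is a phrasing one, since the constant and the coordinates of $U$ are entries of $g$, while $\E[U\mid Z,A,X]$ appears in $\phi=\E[g(U,X)\mid Z,A,X]$ rather than in $g$ itself.
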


\section{Estimation}\label{sec:estimation}

We use our identification strategy to guide the steps of estimation in practice. We walk through a concrete example that reflects how an applied investigator would typically approach this problem in practice, and later describe how to estimate effects in other settings.

The applied investigator would first identify the roles of the variables at hand: the treatment $A$, outcome $Y$, measured confounders or covariates $X$, proxy variables $Z$, and possible unmeasured confounders $U$. The investigator would then specify a working model for the potential outcome in terms of unmeasured confounders $U$, treatment $A$, and covariates $X$. For simplicity, we choose the following working model:
\begin{align*}
    \E[Y(a)|X,U] = \nu_0 + \nu_1'U + \nu_2'UA + \nu_3 A. 
\end{align*}  This model can be factored as $(\alpha+\gamma A)'\left[ 1 \quad U \right]'$ for an appropriate choice of $\alpha$ and $\gamma$.  This allows us to see that this model corresponds to $g(U,X) = \left[ 1 \quad U \right]'$ and $\tau(a, X) = \alpha' + \gamma' A$ in the earlier identification proof. We also assume a MIMIC model for $(U,Z,A,X)$ (Assumption \ref{assm:parametric}).
Importantly, our choice of working model allows for closed-form expressions of the conditional means required in the outcome model, enabling efficient computation. When such expressions are not available (e.g., under more complex or nonlinear models), these quantities can be computed using Monte Carlo integration.

Once the working model is specified, we fit the MIMIC model to our observations of $X$, $A$, and $Z$.  Existing latent modeling packages like \textit{lavaan} in R or MPlus can fit the MIMIC model, making this step accessible to applied researchers. Since this requires selecting the number of factors, we fit models with 1 to $p$ factors and select via the lowest Akaike information criterion (AIC) value.
Subsequently, we use the fitted model to estimate two quantities, $\E[U|Z,A,X]$ and $\E[U|Z,X]$, for each patient at their observed values of $Z$, $A$, and $X$. 
Let $\hat\E[U|Z,A,X]$ and $\hat\E[U|Z,X]$ denoted the estimated values. See Supporting~Information~A for the derivation of a closed form for $\E[U|Z,A,X]$ and $\E[U|Z,X]$.  

We then estimate the parameters of the potential outcome model, $\nu_0$, $\nu_1$, $\nu_2$, and $\nu_3$, by fitting a linear regression of the outcome $Y$ on the observed variables $Z$, $A$, and $X$. Under our assumptions, this model takes the form
$$\E[Y|Z,A,X] = \nu_0 + \nu_1'\E[U|Z,A,X]+\nu_2'\E[U|Z,A,X]A + \nu_3A,$$
where the regression coefficients correspond directly to those in the potential outcome model. Since $U$ is unobserved, we substitute in its estimate, $\hat\E[U | Z, A, X]$, obtained from the fitted MIMIC model. The resulting model includes two main effects and one interaction.

Upon performing linear regression, we recover 
estimates of the parameters of the potential outcome model, which we denote by $\hat\nu_0$, $\hat\nu_1$, $\hat\nu_2$, and $\hat\nu_3$. We put together these estimates and our estimate of $\hat\E[U|Z,X]$ to arrive at the CATE estimate:
\begin{align*}
\hat{\beta}(X,Z) = &(\hat\nu_0+\hat\nu_1'\hat\E[U|Z,X]+\hat\nu_2'\hat\E[U|Z,X]a_1 + \hat\nu_3a_1)\\ 
&\quad -(\hat\nu_0+\hat\nu_1'\E[U|Z,X]+\hat\nu_2'\hat\E[U|Z,X]a_0 + \nu_3a_0 ) \\
= &(a_1-a_0)(\hat\nu_2'\hat\E[U|Z,X] + \hat\nu_3).
\end{align*}
The entire estimation procedure can be formalized using estimating equations, with consistency and asymptotic normality provable under standard conditions. However, we omit these derivations here, as they are unlikely to be of practical concern for applied investigators.

\section{Simulation}\label{sec:synthetic}

We evaluate our method using simulated data where the true treatment effect is known. We examine bias, variance, and asymptotic behavior. When the latent variable model is correctly specified, we also consider the ratio of proxies to latent confounders ($p/k$). To assess robustness to model misspecification, we consider scenarios where the mean function is quadratic in $U$ but modeled as linear, where $U$ follows a skew normal rather than a normal distribution, and where the treatment $A$ is binary rather than continuous. Finally, we explore robustness to violations of structural assumptions, including cases where $Z$ is a direct confounder or an IV, but is treated as a proxy. All computations  are completed using the parametric strategy from Section \ref{sec:estimation}, with $g(U,X)=[1,U]'$.

We compare our method to inverse probability weighting (IPW), linear models, an IV approach, and PCI. See Supporting~Information~B for details. Briefly, IPW uses a linear regression model for $A$ in terms of $Z$ to determine the weights and then models $Y$ with a linear model in $A$.\cite{naimi2014constructing}  For binary treatments, the linear model for A is replaced with a logistic regression. The linear method uses a linear regression model for $Y$ in terms of $A$ and $Z$. For the proximal method, we randomly split the proxy variables evenly into NCOs and NCEs. Since the proxies are interchangeable, this is equivalent to averaging over all possible even divisions into NCOs and NCEs. Since most of these techniques estimate the Average Treatment Effect (ATE) $\E[Y(1)-Y(0)]$,  rather than the CATE, we take the expectation of our CATE estimations since $\E[\E[Y(1)-Y(0)|Z]]=\E[Y(1)-Y(0)]$. 

We use the data-generating process described in Assumption~\ref{assm:parametric} as the baseline. The outcome is generated as $Y = (\alpha + A\gamma)[1 \quad U']'$, where $\alpha$ and $\gamma$ are selected to yield a true average treatment effect (ATE) of 0.3. Measured confounders $X$ are ignored. For each setting, we generate 1000 independent samples and apply each method to estimate the treatment effect. This process is repeated across 1000 simulation replicates. Then, for each robustness check, we slightly modify this process in targeted ways to test how well the method performs under deviations from the ideal case.  We follow Tucker et al.\cite{Tucker_Koopman_Linn_1969} and Nieser and Cochran\cite{nieser2023addressing} to generate realistic loading and diagonal covariance matrices with varying levels of communality. We also vary the sample size, the number of proxies ($p$), the number of latent dimensions ($k$), and the strength of confounding. To evaluate our method's coverage, we construct 95\% bootstrap confidence intervals. Source code can be found at \url{https://github.com/HaleyColgateKottler/LatentProxyVars},
and details in Supporting~Information~B. Since the IV method is sensitive to assumption violations, it is excluded from the main plots for readability. For results including IV as a comparator, see Supporting~Information~B.

Figure \ref{fig:robustpanel1}A illustrates estimation accuracy across varying sample sizes.  As the sample size increases, the 25th and 75th percentiles for our method's estimates tighten around the true effect value of $0.30$, suggesting consistency.  At smaller sample sizes, our estimates show greater accuracy than the comparison techniques.

\begin{figure}
    \centering
    \includegraphics[width=.9\linewidth]{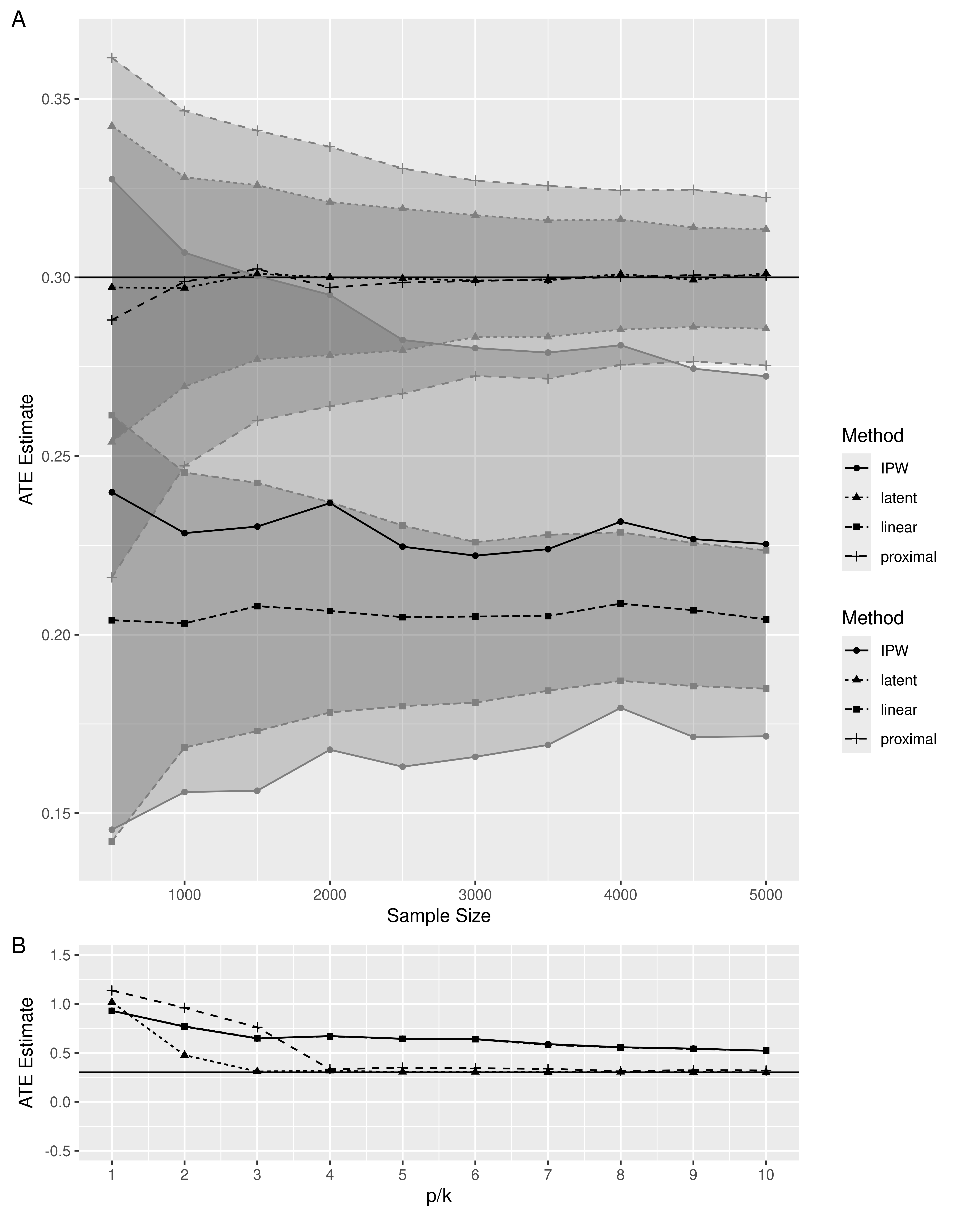}
    \caption{A) Estimation accuracy of our method and comparison methods as a function of sample size, reported as the median estimate and interquartile ranges (25th and 75th percentiles) across simulation replications. B) Estimation accuracy as function of different ratios of proxies to latent confounders ($p/k$), reported as the median estimate across simulation replications. Our method is the most accurate for all sample sizes, rivaled only by proximal causal inference which has a higher mean squared error.  With $p/k\geq2$ our method is more accurate than the comparison methods, though with higher $p/k$ ratios, proximal causal inference obtains similar accuracy.}
    \label{fig:robustpanel1}
\end{figure}

Figure \ref{fig:robustpanel1}B shows our method is more accurate with larger ratios of proxies to latent confounders ($p/k$). Accuracy improves notably when this ratio increases from 1 to 2 and from 2 to 3, indicating that having three proxies per latent dimension is particularly beneficial. This aligns with expectations, as factor analysis models typically require at least three observed variables per latent factor. With at least three proxies per dimension, our method outperforms all comparison methods in terms of how close the median is to the true effect, until the PCI method catches up around a $p/k$ ratio of 8. 

For a sample size of 500, bootstrap confidence intervals give a coverage of 96\%, using 2000 resamples and 150 simulation replications. Since bootstrapping accounts for variability from both modeling steps, and is generally conservative, this is a reasonable coverage estimate.  

\begin{figure}
    \centering
    \includegraphics[width=\linewidth]{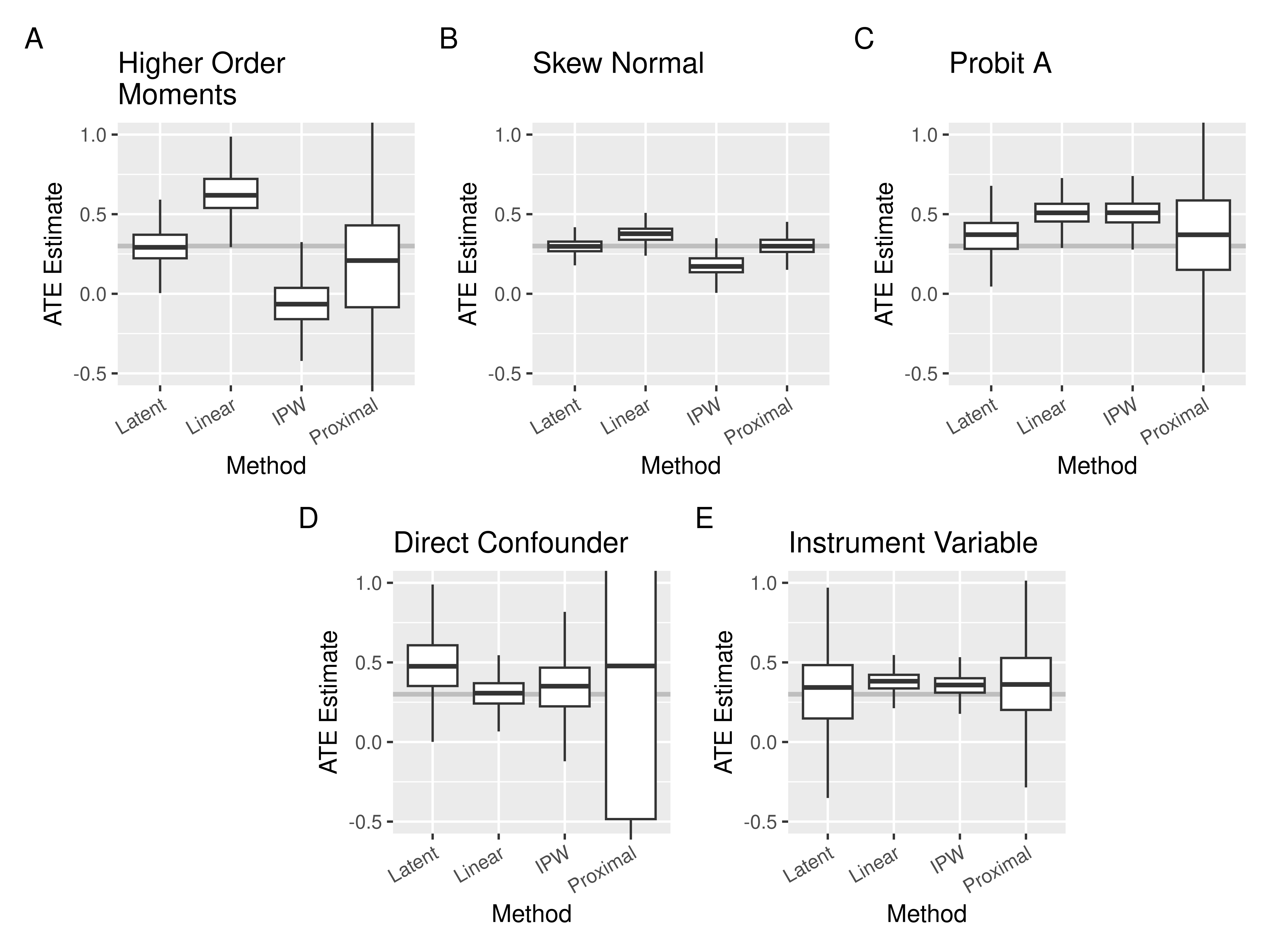}
    \caption{Performance of methods under different modeling and structural assumption violations: A) the true outcome generation process is quadratic in $U$ but the estimator is linear, B) the latent variable is skew normal instead of normal, C) the treatment is binary but estimated as continuous, D) $Z$ is a direct confounder not a proxy, E) $Z$ is an instrument not a proxy. The centerline is the median, the hinges show the inter-quartile range (IQR), and the whiskers show the minimum of the highest estimate or 1.5IQR from the upper hinge and the maximum of the lowest estimate or 1.5IQR from the lower hinge. }
    \label{fig:robustpanel2}
\end{figure}

Figure \ref{fig:robustpanel2}A demonstrates the ability of our method to maintain accurate estimates despite an incorrect mean-model. In this scenario, we assume for estimation that $\E[Y(a)|U]=\left[1 \quad U\right]\tau(a,X)$, which can be seen as a first moment approximation, but generate data with $\E[Y(a)|U]=\left[ 1 \quad U \quad U^2 \right]\tau(a,X)$ instead.  We see that with this adjustment in data generation, all the comparison methods are biased, while our estimate remains essentially unbiased.

Our approach can be reasonable even when the distribution of the latent variable is misspecified (Figure \ref{fig:robustpanel2}B). In this scenario, the data is generated with a skew normal distribution for $U$, even though our method assumes a normal distribution. Our method exhibits lower bias than all comparison methods except proximal causal inference.

Figure \ref{fig:robustpanel2}C illustrates our method when applied to binary treatments. We generate a binary intervention using a probit model, $A^* = \mathbb{I}_{A \geq 0}$, but continue to assume a continuous treatment during estimation. All methods exhibit some bias under this misspecification, though our method and PCI are notably less biased than the others. However, PCI produces estimates with higher variance, resulting in lower overall accuracy.

We then consider violations in structural assumptions (Figure \ref{fig:robustpanel2}D). In Section \ref{sec:identification}, we showed how treating a proxy as a direct confounder introduces bias.  Here we consider the opposite: $Z$ is a direct confounder, not a proxy. As expected, linear and IPW techniques---which correctly treat $Z$ as a confounder---are less biased than ours. Nevertheless, our method still recovers the correct sign and order of magnitude. So even when a confounder is incorrectly assumed to be a proxy variable, our method can produce reasonable results.

Figure \ref{fig:robustpanel2}E explores the impact of treating an IV as a proxy. We apply our technique to data where $Z$ and $U$ are independent multivariate normal random variables, and generate $A$ via $A=b_1'U+b_2'Z + c + \varepsilon_A$, with $\varepsilon_A\sim \mathcal{N}(0,1)$. In this setting, our method exhibits a negative bias, while the comparison methods are biased in the opposite direction. IV is the only method that yields an unbiased estimate (Supporting~Information~B), as expected.  

It is important to note that in the structural misspecification tests, our method produced some estimates with absolute error of at least 1 (25/1000 trials for the direct confounder case, 71/1000 trials for the IV case). This occurs primarily because $Z_i\indep Z_j$ for all $i\neq j\in1,\dots,p$ and $A$ is essentially a linear combination of entries in $Z$ and independent noise, since $U\indep Z$.  Factor analysis then often recovers a single factor model where the largest loading is on $A$, with minimal loadings on the components of $Z$.  This creates colinearity in the regression of $Y$ onto $A$ and $U$, creating outliers.  In usual practice, this would be noticed, but it remains a problem in simulations with many trials.

For the cases when the model is incorrect, we computed our method's coverage using 100 replications with 500 bootstrap replications. We achieve a coverage of 0.96 for the higher order outcome generation, 0.94 for the skew normal latent variable, 0.87 for the binary treatment, 0.97 for the direct confounder treated as a proxy, and 0.94 for the IV treated as a proxy. The coverage is slightly lower in the binary case, but otherwise similar to the coverage when the model is correctly specified.

\section{Case Study}\label{sec:case}

We applied our method to investigate the impact of hospitalization or discharge on older adults in the ED with chest pain. Older adults (65+ years) frequently present to EDs with chest pain, a nonspecific symptom that complicates decisions regarding whether they should be admitted or discharged.\cite{avendano2020average} This decision is clinically significant because hospital admission poses risks for older adults, such as deconditioning and hospital-acquired infections, but chest pain can also be indicative of serious conditions so not admitting a patient also poses risks.

Our analysis uses EHR data from MIMIC-IV v2.2, a large, de-identified database containing real hospital records from a tertiary medical center in Boston, Massachusetts.\cite{johnson2020mimic}  We restricted the dataset to patients aged 65 or older who presented to the ED with “chest pain” listed in their chief complaint and were assigned an acuity level of 2 or 3 since those are the patients for whom the decision to admit or discharge is unclear. We excluded records where the disposition was neither admission nor discharge. Table \ref{tab:1} provides an overview of the data, including specific lists of covariates and proxies.  Disposition is our treatment, and readmission or mortality within 30 days of discharge from any unit (ED or inpatient) is our outcome of interest.  We treat demographic data as covariates.  Data gathered during triage, along with total treatment time, are treated as proxies.

\begin{table}
    \centering
    \begin{tabular}{lllll}
    \toprule
         & & \bf{Admitted} & \bf{Home} & \bf{Overall} \\ \midrule
         \textbf{Covariates}
         & \textbf{N (\% present)} & \textbf{Mean (SD)} & \textbf{Mean (SD)} & \textbf{Mean (SD)}\\ \midrule
        Age & 7081 (100) & 75.4 (7.5) & 73.8 (7.1) & 74.7 (7.4)\\\midrule
        & \textbf{Level} & \textbf{Count (\%)} & \textbf{Count (\%)} & \textbf{Count (\%)}\\\midrule
        Gender & Female & 1930 (48.0) & 1798 (58.8) & 3728 (52.6)\\
        & Male & 2093 (52.0) & 1260 (41.2) & 3353 (47.4)\\
        Race/Ethnicity
        & American Indian/\\
        &Alaska Native & 6 (0.1) & 5 (0.2) & 11 (0.2) \\
        & Asian & 155 (3.9) & 124 (4.1) & 279 (3.9)\\
        & Black & 632 (15.7) & 623 (20.4) & 1255 (17.7)\\
        & Hispanic/Latinx & 227 (5.6) & 249 (8.1) & 476 (6.7)\\
        & Other & 181 (4.5) & 172 (5.6) & 353 (5.0)\\
        & White & 2787 (69.3) & 1881 (61.5) & 4668 (65.9)\\
        & Missing & 35 (0.9) & 4 (0.1) & 39 (0.6)\\
        Insurance
        & Medicaid & 100 (2.5) & 42 (1.4) & 142 (2.0)\\
        & Medicare & 2545 (63.3) & 918 (30.0) & 3463 (48.9)\\
        & Other & 1341 (33.3) & 517 (16.9) & 1858 (26.2)\\
        & Missing & 37 (0.9) & 1581 (51.7) & 1618 (22.8)\\
        Marital status
        & Divorced & 347 (8.6) & 127 (4.2) & 474 (6.7)\\
        & Married & 1846 (45.9) & 687 (22.5) & 2533 (35.8)\\
        & Single & 684 (17.0) & 323 (10.6) & 1007 (14.2)\\
        & Widowed & 1073 (26.7) & 332 (10.9) & 1405 (19.8)\\
        & Missing & 73 (1.8) & 1589 (52.0) & 1662 (23.5)\\\midrule
        \textbf{Proxies} & \textbf{N (\% present)} & \textbf{Mean (SD)} & \textbf{Mean (SD)} & \textbf{Mean (SD)}\\ \midrule
        Heart rate & 7073 (99.9) & 79.2 (17.4) & 76.3 (14.7) & 78.0 (16.4)\\
        Systolic bp& 7069 (99.8) & 140.6 (25.4) & 146.8 (23.1) & 143.3 (24.6)\\
        Diastolic bp&7060 (99.7) & 73.5 (15.3) & 75.7 (14.3) & 74.5 (14.9)\\
        Respiration rate & 7014 (99.1) & 18.1 (2.8) & 17.6 (1.9) & 17.9 (2.5)\\
        O2 saturation & 7038 (99.4) & 97.6 (3.1) & 98.3 (1.7) & 97.9 (2.6)\\
        Temperature & 6971 (98.4) & 98.0 (1.0) & 97.9 (0.8) & 98.0 (0.9)\\
        Treatment time & 7148 (100) & 7.1 (4.5) & 13.9 (9.0) & 10.1 (7.6)\\\midrule
        & \textbf{Level} & \textbf{Count (\%)} & \textbf{Count (\%)} & \textbf{Count (\%)}\\\midrule
        Acuity & 2 & 3469 (86.2) & 2181 (71.3) & 5650 (79.8)\\
        & 3 & 554 (13.8) & 877 (28.7) & 1431 (20.2)\\\midrule
        \textbf{Outcome} & \textbf{Level} & \textbf{Count (\%)} & \textbf{Count (\%)} & \textbf{Count (\%)}\\\midrule
        30 day & Yes & 888 (22.1) & 329 (10.8) & 1217 (17.2) \\
        readmission & No & 3135 (77.9) & 2729 (89.2) & 5864 (82.8)\\\bottomrule
    \end{tabular}
    \caption{Demographics, triage data, and outcome by disposition\\bp: blood pressure}
    \label{tab:1}
\end{table}

We handled missing data in several ways.  First, we excluded any records missing the treatment or outcome, or missing three or more variables out of the proxies or covariates.  We introduced an ``Unknown" category for insurance type, marital status, and race to account for possible differences in patients without full demographic data. Finally, we applied multiple imputation once using the MICE R package\cite{mice} including all variables.  This resulted in a final dataset of 7081 ED visits. For categorical variables, we treat the most common level as the reference class.  To incorporate covariates, we adopted the MIMIC model as described in Section \ref{sub:parametric}.  For clarity, we use MIMIC-IV to refer to the data set and MIMIC to refer to the multiple indicators multiple causes model.  Details of this model are provided in Supporting~Information~A.

For comparison, we used the same comparison methods described in the prior section, in addition to an unadjusted estimate.  Since the proxies are not interchangeable, we tested every possible even split of the proxies into NCOs and NCEs for PCI.  Confidence intervals (CIs) for all estimates were computed using 5000 bootstrap replications.  

A direct comparison of patients admitted to the hospital versus those discharged home, without adjusting for other covariates, estimated the ATE as 0.113 [0.097,  0.130], suggesting that hospital admission increases the risk of 30-day readmission for the average patient. IPW produced a much larger and more variable, estimate of 0.648 [-0.336,  1.637]. A basic linear model provided a smaller estimate of 0.074 [0.052,  0.095].  PCI resulted in a range of estimates from 0.044 to 0.172, with a median value of 0.089. The latent variable method yielded a distinctly different result: a numerically negative estimate of -0.021 [-0.192,  0.109]. This estimate suggests that hospitalization may reduce the risk of readmission for the average patient, in contrast to the positive estimates from the other methods. 

\section{Conclusion}\label{sec:discussion}
We presented an alternate method for using proxy variables in causal inference with unmeasured confounding. We use a factor analysis model to relate proxies and treatment to latent confounders, building covariates which feed into the outcome model that adjusts for unmeasured confounding. Our simulations demonstrated several key insights. First, categorizing proxies as confounders leads to biased estimates, underscoring the need for dedicated proxy methods. Our approach reduces this bias and achieves higher overall accuracy compared to standard linear regression and IPW under correct model specifications.

Moreover, we found our method to be surprisingly accurate under various forms of model misspecification. When using a linear model to approximate a quadratic relationship, assuming normality when latent variables are skew-normal, or assuming continuous treatments despite binary treatment generation, our estimates consistently produced less bias compared to most estimates produced by comparator methods. Our method was only outperformed when a direct confounder or an IV was incorrectly treated as a proxy, but even then, only by non-proxy based methods. In these cases our method became slightly biased, although estimates retained the correct sign and plausible magnitude.

In our case study involving older adults presenting to the ED for chest pain, our method uniquely estimated a numerically negative ATE of hospitalization on readmission, aligning better with prior literature and clinical expectations.\cite{avendano2020average} By contrast, comparison methods produced uniformly positive ATE estimates. This difference illustrates how carefully designed proxy variable methods can critically impact clinical interpretations and subsequent recommendations.

We acknowledge several limitations. First, our method cannot handle every possible relationship between the confounders $(X,U)$ and the outcome $Y$. This limits the expressiveness of our technique, though we did demonstrate that with $g(U,X)=[1, U]'$, we were able to capture accurate ATE estimates even when $Y$ had a quadratic dependence on $U$. Second, we computed confidence intervals using bootstrapping. This led to overly conservative estimates.  Additionally, our method may be biased when we treat a confounder as a proxy, as shown in simulation.  While we believe our ATE estimate to be more realistic in the case study, we would need an randomized controlled trial to validate our estimates.  A final limitation is that successful applications of our method depends not only on statistical performance but also on practitioner uptake. Whether practitioners eventually adopt the method will depend on how our method is received by the community.

Future work will aim to extend robustness and flexibility, incorporating a broader class of approximations for both the latent variable and its effect on the outcome and exploring analytical methods for confidence interval estimation. A promising direction includes developing automated approaches that accurately distinguish proxies and confounders without explicit user categorization. Ultimately, by providing another strategy for leveraging proxies for unmeasured confounding, we hope to encourage broader adoption of proxy-based causal inference methods.

\subsection*{Acknowledgements}
Research of Haley Colgate Kottler was supported in part by National Science Foundation Award DMS-2023239 through the Institute for Foundations of Data Science and Haley Colgate Kottler and Amy Cochran were supported in part by the University of Wisconsin Fall Competition.

\bibliographystyle{vancouver}
\bibliography{biblio.bib}

\newpage
\appendix
\section{Identification Proof}\label{ap:idproofs}

\noindent 
Proof of Theorem 1:
\begin{proof}
Suppose the relevant assumptions hold. For $a\in\A$, we have
    \begin{align}
        \E[Y|Z,A=a,X] &= \E[\E[Y|U,Z,A=a,X]|Z,A=a,X]\\
        &= \E[\E[Y(a)|U,Z,A=a,X]|Z,A=a,X]\\
        &= \E[\E[Y(a)|U,X]|Z,A=a,X]
    \end{align}
using the law of iterated expectation, consistency, and conditional independence. 

By Assumption~4.2, there exist $\tau(a,X)$ so that $\E[Y(a)|U,X] = \tau(a,X) g(U,X)$ for each $a \in \A$. Therefore,
\begin{align*}
    \E[Y|Z,A=a,X]
        = \tau(a,X)\E[g(U,X)|Z,A=a,X].
\end{align*}
Note that the left hand side is a function of the observed distribution, whereas 

\noindent
$\E[g(U,X)|Z,A=a,X]$ can be identified by the observed distribution by Assumption~4.1. Provided $\tau(a,X)$ is the unique solution to the above equation, then we can identify $\tau(a,X)$ from the observed distribution.
Then, consider 
\begin{align*}
    \E[Y(a)|Z,X] &= \E[\E[Y(a)|U,Z,X]|Z,X]\\
    &= \E[\E[Y(a)|U,X]|Z,X]\\
    &= \E[\tau(a,X)g(U,X)|Z,X]\\
    &= \tau(a,X)\E[g(U,X)|Z,X].
\end{align*} 
The first step comes from the law of iterated expectation, the second from conditional independence, the third by Assumption~4.2, and finally by linearity of expectations. Again by linearity, $$\mathbb{E}[Y(a)-Y(a')|Z=z,X]=(\tau(a,X)-\tau(a',X))\mathbb{E}[g(U)|Z=z,X].$$
It remains to argue that $\mathbb{E}[g(U,X)|Z=z,X]$ can be identified from the observed distribution. This comes from Assumption~4.1. Since $\E[g(U,X)|Z,A,X]$ can be identified from the observed distribution for $(Z,A,X)$, then $\E[g(U,X)|Z=z,X] = \E[\E[g(U,X)|Z=z,A,X]|Z=z,X]$ can be identified from the observed distribution as well.
\end{proof}

\section*{Supporting Information A: Multiple Indicators Multiple Causes Model Properties}\label{ap:mimic}
\begin{observation}
    From the definitions of the MIMIC model, it must be true that $$Z|U \sim MVN(\Lambda U+\nu, \Psi)$$ and $$Z\sim MVN(\nu, \Lambda\Lambda'+\Psi).$$  We also have 
\begin{align*}
    \E[U|X] = \E[\Gamma X + \varepsilon_U|X] = \Gamma X
\end{align*}
and 
\begin{align*}
    \E[(U-\Gamma X)(U-\Gamma X)'|X] = \E[\varepsilon_U\varepsilon_U'] = I.
\end{align*}
\end{observation}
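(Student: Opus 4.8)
The plan is to exploit the fact that every variable in the MIMIC model is an affine image of the underlying independent standard-normal noises, so that the whole system is jointly Gaussian and each claim reduces to computing a mean and a covariance. Concretely, recall from Assumption~\ref{assm:parametric} that $U = \Gamma X + \varepsilon_U$ and $Z = \Lambda U + \nu + \Psi^{1/2}\varepsilon_Z$, with $\varepsilon_X,\varepsilon_U,\varepsilon_Z$ mutually independent standard normals. Substituting the expression for $U$ into that for $Z$ shows that the stacked vector $(X',U',Z')'$ is an affine function of $(\varepsilon_X',\varepsilon_U',\varepsilon_Z')'$; since an affine image of a jointly Gaussian vector is jointly Gaussian, the triple $(X,U,Z)$ is jointly Gaussian. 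This is the \emph{workhorse} observation: it guarantees a priori that every conditional and marginal law appearing in the statement is itself normal, so I need only verify the advertised means and covariances rather than argue normality separately in each case.

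First I would handle the two distributional claims for $Z$. For $Z\mid U$, I condition on $U$, under which $\Lambda U+\nu$ is deterministic and the only remaining randomness is $\Psi^{1/2}\varepsilon_Z$; because $\varepsilon_Z$ is independent of $U$ (it is one of the independent noises, and $U$ is a function of $\varepsilon_X,\varepsilon_U$ alone), the conditional mean is $\Lambda U+\nu$ and the conditional covariance is $\Psi^{1/2}\,\mathrm{Cov}(\varepsilon_Z)\,(\Psi^{1/2})' = \Psi$, using $\mathrm{Cov}(\varepsilon_Z)=I$ and the symmetry of the diagonal matrix $\Psi^{1/2}$. For the marginal law of $Z$, I combine this with the laws of total expectation and covariance: $\E[Z] = \Lambda\,\E[U]+\nu = \nu$ (since $\E[U]=\Gamma\,\E[X]=0$, as $\E[X]=0$), and $\mathrm{Cov}(Z) = \Lambda\,\mathrm{Cov}(U)\,\Lambda' + \E[\mathrm{Cov}(Z\mid U)] = \Lambda\,\mathrm{Cov}(U)\,\Lambda' + \Psi$.

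The remaining two identities are direct first- and second-moment computations built on the single observation $U-\Gamma X = \varepsilon_U$. Conditioning on $X$ and using that $\varepsilon_U$ is independent of $X$ with mean zero gives $\E[U\mid X] = \Gamma X + \E[\varepsilon_U\mid X] = \Gamma X$; the same independence gives $\E[(U-\Gamma X)(U-\Gamma X)'\mid X] = \E[\varepsilon_U\varepsilon_U'\mid X] = \E[\varepsilon_U\varepsilon_U'] = I$, since $\varepsilon_U$ is standard normal. These two are essentially bookkeeping once the noise-independence structure is in hand.

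I expect the only genuinely delicate step to be the covariance in the marginal law of $Z$. The decomposition above yields $\mathrm{Cov}(Z) = \Lambda\,\mathrm{Cov}(U)\,\Lambda' + \Psi$, so collapsing this to exactly $\Lambda\Lambda'+\Psi$ requires the factor-analytic normalization in which $U$ is placed on its standardized scale with $\mathrm{Cov}(U)=I$ — the same scale reflected in the conditional moments $\E[U\mid X]=\Gamma X$ and $\mathrm{Cov}(U\mid X)=I$ of the last two identities. Equivalently, if one keeps the covariate contribution explicit one obtains $Z\mid X \sim MVN(\Lambda\Gamma X+\nu,\ \Lambda\Lambda'+\Psi)$, and the stated marginal form is the covariate-free (standardized-factor) specialization. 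Tracking this normalization, together with keeping straight which noise is independent of which conditioning variable and the symmetry of $\Psi^{1/2}$, is the main thing to get right; everything else is routine Gaussian computation.
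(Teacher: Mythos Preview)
Your proposal is correct. The paper states this as an Observation without proof, treating all four claims as immediate consequences of the MIMIC assignment functions in Assumption~\ref{assm:parametric}; your argument supplies exactly the routine Gaussian bookkeeping (affine images of independent standard normals, then first and second moments) that the paper leaves implicit.

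You also correctly flag the one genuine subtlety: the marginal claim $Z\sim MVN(\nu,\Lambda\Lambda'+\Psi)$ holds as stated only when $\mathrm{Cov}(U)=I$, i.e.\ in the covariate-free factor-analysis specialization, whereas the full MIMIC model with $U=\Gamma X+\varepsilon_U$ gives $\mathrm{Cov}(U)=\Gamma\Theta\Gamma'+I$ and hence an extra $\Lambda\Gamma\Theta\Gamma'\Lambda'$ term. The paper handles this not inside the Observation but by the sentence that immediately follows it (``For the moment, we reduce MIMIC to the usual factor analysis model by ignoring $X$''), so your reading---that the first two distributional claims are meant in the standardized-factor/no-$X$ setting while the last two are the $X$-conditional moments---matches the paper's intent.
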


For the moment, we reduce MIMIC to the usual factor analysis model by ignoring $X$.
\begin{observation}\label{matrixaug}
    Let $A=b'U+c+\varepsilon_A$. If we augment the necessary matrices, we can write $$\begin{bmatrix}
    Z\\A
\end{bmatrix} = \begin{bmatrix}
    \Lambda\\
    b'
\end{bmatrix}U + \begin{bmatrix}
    \nu\\c
\end{bmatrix} + \begin{bmatrix}
    \Psi^{1/2}\varepsilon_Z\\\varepsilon_A
\end{bmatrix}.$$
    Therefore it must be true that 

    $$\begin{bmatrix}
        Z\\A
    \end{bmatrix}|U\sim MVN\left(\begin{bmatrix}
    \Lambda\\
    b'
\end{bmatrix}U + \begin{bmatrix}
    \nu\\c
\end{bmatrix}, \begin{bmatrix}
        \Psi & \overset{\rightharpoonup}{0}\\
        \overset{\rightharpoonup}{0}' & 1
    \end{bmatrix}\right)$$
    
    and
    $$\begin{bmatrix}
        Z\\A
    \end{bmatrix}\sim MVN\left(\begin{bmatrix}
        \nu\\c
    \end{bmatrix},
    \begin{bmatrix}
        \Lambda\\b'
    \end{bmatrix} 
    \begin{bmatrix}
        \Lambda'&b
    \end{bmatrix}  +    
    \begin{bmatrix}
        \Psi & \overset{\rightharpoonup}{0}\\
        \overset{\rightharpoonup}{0}' & 1
    \end{bmatrix}\right)$$ where $\overset{\rightharpoonup}{0}$ is a $p\times1$ vector of zeros.      
    
    We will denote the covariance matrix $$\begin{bmatrix}
        \Psi & \overset{\rightharpoonup}{0}\\
        \overset{\rightharpoonup}{0}' & 1
    \end{bmatrix}$$ by $\Psi^*$.
\end{observation}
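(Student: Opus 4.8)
The statement is a linear–Gaussian bookkeeping result, and the plan is to establish it by the same reasoning used for the first Observation, now applied to the stacked vector rather than to $Z$ alone. First I would record the two defining equations from Assumption~\ref{assm:parametric}, namely $Z = \Lambda U + \nu + \Psi^{1/2}\varepsilon_Z$ and $A = b'U + c + \varepsilon_A$, and obtain the augmented representation simply by vertical concatenation: stacking $Z$ on top of $A$ stacks the coefficient matrices $\Lambda$ and $b'$, the intercepts $\nu$ and $c$, and the noise terms $\Psi^{1/2}\varepsilon_Z$ and $\varepsilon_A$ into the block forms displayed in the statement. No computation is needed here beyond rewriting the two equations as one block equation.

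For the conditional law, I would condition on $U$. Once $U$ is fixed, the term $[\Lambda U + \nu;\, b'U + c]$ is deterministic, so the only randomness in the stacked vector comes from $\eta := [\Psi^{1/2}\varepsilon_Z;\, \varepsilon_A]$, which is independent of $U$ by the independence of the driving noises in Assumption~\ref{assm:parametric}; hence conditioning on $U$ leaves the distribution of $\eta$ unchanged. Since $\eta$ is a fixed linear image of the standard Gaussian vector $(\varepsilon_Z,\varepsilon_A)$, it is mean-zero Gaussian, and its covariance is computed block by block: $\mathrm{Cov}(\Psi^{1/2}\varepsilon_Z) = \Psi^{1/2} I \Psi^{1/2} = \Psi$ using that $\Psi^{1/2}$ is symmetric (as $\Psi$ is diagonal positive definite), $\mathrm{Var}(\varepsilon_A)=1$, and the cross term vanishes because $\varepsilon_Z \indep \varepsilon_A$. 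This yields the block-diagonal covariance $\Psi^*$ and establishes the stated conditional multivariate normal.

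For the marginal law I would use that, in the reduced model with $X$ ignored, $U = \varepsilon_U$ is standard normal, so $(U,\varepsilon_Z,\varepsilon_A)$ is a jointly Gaussian (indeed independent) vector and the stacked vector $MU + m + \eta$, with $M = [\Lambda; b']$ and $m = [\nu; c]$, is an affine image of it; affine images of Gaussian vectors are Gaussian, giving normality directly. The mean is $M\,\E[U] + m = m = [\nu; c]$ since $\E[U]=0$. For the covariance I would invoke the law of total covariance: because the conditional covariance from the previous step equals the constant $\Psi^*$, its expectation over $U$ is again $\Psi^*$, while $\mathrm{Cov}(\E[\,\cdot\mid U]) = \mathrm{Cov}(MU) = M\,\mathrm{Cov}(U)\,M' = MM'$ using $\mathrm{Cov}(U)=I$. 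Adding gives $MM' + \Psi^* = [\Lambda; b'][\Lambda'\; b] + \Psi^*$, exactly the claimed marginal covariance.

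There is no substantive obstacle here: the result is entirely routine once joint Gaussianity and the noise-independence assumptions are made explicit. The only points that require a moment of care are (i) verifying that the off-diagonal block of $\Psi^*$ is genuinely zero, which is precisely where $\varepsilon_Z \indep \varepsilon_A$ is used, and (ii) noting that the marginal statement relies on $U$ itself being Gaussian in the reduced model, so that the marginal is exactly---not merely approximately---multivariate normal. Both are immediate from Assumption~\ref{assm:parametric}.
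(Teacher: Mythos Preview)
Your argument is correct and matches the paper's treatment: the paper states this as an observation without separate proof, relying on exactly the block-stacking of the defining equations and the immediate linear-Gaussian consequences you spell out. You have simply made explicit the routine steps (independence of noises for the off-diagonal zero block, $\mathrm{Cov}(U)=I$ for the marginal covariance) that the paper leaves implicit.
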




\begin{lemma}\label{Minvertible}
    Define $M:=(\Lambda'\Psi^{-1}\Lambda+I).$ If $\Lambda$ is full rank, $M$ is invertible.
\end{lemma}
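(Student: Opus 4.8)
The plan is to prove the slightly stronger fact that $M = \Lambda'\Psi^{-1}\Lambda + I$ is symmetric positive definite, from which invertibility is immediate. First I would record the basic structural facts: $M$ is a $k\times k$ matrix (so the identity block is $I_k$), and since $\Psi$ is diagonal and positive definite, $\Psi^{-1}$ exists, is diagonal, symmetric, and positive definite. Consequently $\Lambda'\Psi^{-1}\Lambda$ is symmetric and positive semidefinite, and adding $I$ leaves $M$ symmetric.

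The key step is to bound the associated quadratic form from below. For an arbitrary nonzero $x\in\R^k$, write $x'Mx = (\Lambda x)'\Psi^{-1}(\Lambda x) + x'x$. The first summand is nonnegative because $\Psi^{-1}\succ 0$, and the second equals $\|x\|^2 > 0$ since $x\neq 0$. Hence $x'Mx > 0$ for every nonzero $x$, so $M\succ 0$ and in particular $M$ is invertible. If one prefers to make use of the hypothesis that $\Lambda$ has full column rank, one can note that then $\Lambda x\neq 0$ whenever $x\neq 0$, so the first summand is in fact strictly positive on its own; but this refinement is not needed to conclude invertibility.

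There is essentially no obstacle here — the proof is two lines — and the only things to be careful about are the bookkeeping that the identity block is $k\times k$ and that ``full rank'' for the non-square matrix $\Lambda\in\R^{p\times k}$ with $k<p$ means full column rank. It is worth remarking that the conclusion actually holds for any $\Lambda$ whatsoever, with the full-rank hypothesis retained only because it is already in force under Assumption~\ref{assm:parametric}; the hypothesis becomes genuinely relevant elsewhere (e.g.\ in ensuring identifiability of the loadings). This lemma is what subsequently licenses writing $M^{-1}$ in the closed-form expressions for $\E[U\mid Z,A,X]$ and $\E[U\mid Z,X]$.
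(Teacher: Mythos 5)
Your proof is correct and follows essentially the same route as the paper's: both establish that $M$ is symmetric positive definite by examining the quadratic form $x'Mx = (\Lambda x)'\Psi^{-1}(\Lambda x) + x'x$. The one (valid) refinement you add is observing that the $+I$ term alone already forces strict positivity, so the full-rank hypothesis on $\Lambda$ is not actually needed for this lemma, whereas the paper uses full rank to make the first summand strictly positive.
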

\begin{proof}
    We will show $\Lambda'\Psi^{-1}\Lambda$ is positive definite since the sum of positive definite matrices is also positive definite, and therefore invertible.  Clearly $\Lambda'\Psi^{-1}\Lambda$ is symmetric.
    As an invertible covariance matrix, $\Psi$ must be positive definite. The inverse must also be positive definite since the inverse of the eigenvalues of a matrix are the eigenvalues of the matrix inverse. Since $\Lambda$ is full rank, $\Lambda v=0$ if and only if $v$ is the zero vector. For a nonzero vector $v\in\mathbb{R}^k$,
    \begin{align*}
        v'\Lambda'\Psi^{-1}\Lambda v &= (\Lambda v)'\Psi^{-1}(\Lambda v) > 0
    \end{align*} since $\Lambda v\neq0$ and $\Psi^{-1}$ is positive definite.
\end{proof}

\begin{lemma}\label{ufromz}
    Let $U\sim MVN(0,I)\in\mathbb{R}^k$, and $Z|U\sim MVN(\Lambda U+\nu,\Psi)\in\mathbb{R}^p$. Fix $z\in \mathbb{R}^p$ and let $d:=\Lambda'\Psi^{-1}(z-\nu)$. If $M:=(\Lambda'\Psi^{-1}\Lambda+I)$ is invertible, then $$U|Z=z\sim MVN(M^{-1}d, M^{-1}).$$ 
\end{lemma}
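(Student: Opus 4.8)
The plan is to compute the conditional density $p(u \mid z)$ directly from Bayes' rule and recognize it as a Gaussian kernel by completing the square. Since $\Psi$ is positive definite its inverse exists, and we may write
$$p(u \mid z) \;\propto\; p(z \mid u)\,p(u) \;\propto\; \exp\!\Big(-\tfrac12(z - \Lambda u - \nu)'\Psi^{-1}(z - \Lambda u - \nu)\Big)\,\exp\!\Big(-\tfrac12\,u'u\Big),$$
where the suppressed constants of proportionality do not depend on $u$. Setting $w := z - \nu$ and expanding the first quadratic form gives $(w - \Lambda u)'\Psi^{-1}(w-\Lambda u) = w'\Psi^{-1}w - 2u'\Lambda'\Psi^{-1}w + u'\Lambda'\Psi^{-1}\Lambda u$, so that the total exponent, up to $u$-free constants, equals $-\tfrac12\big(u'(\Lambda'\Psi^{-1}\Lambda + I)u - 2u'\Lambda'\Psi^{-1}w\big) = -\tfrac12\big(u'Mu - 2u'd\big)$ with $M = \Lambda'\Psi^{-1}\Lambda + I$ and $d = \Lambda'\Psi^{-1}(z-\nu)$ exactly as defined in the statement.

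Next I would complete the square. Note first that $\Lambda'\Psi^{-1}\Lambda = (\Psi^{-1/2}\Lambda)'(\Psi^{-1/2}\Lambda) \succeq 0$, so $M \succeq I \succ 0$; in particular $M$ is symmetric positive definite (consistent with, and slightly stronger than, the invertibility hypothesis and Lemma~\ref{Minvertible}), hence $M^{-1}$ is a valid covariance matrix. Using symmetry of $M$, $u'Mu - 2u'd = (u - M^{-1}d)'M(u - M^{-1}d) - d'M^{-1}d$, and absorbing the last term into the normalizing constant yields $p(u\mid z) \propto \exp\!\big(-\tfrac12(u - M^{-1}d)'M(u - M^{-1}d)\big)$. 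Because $M \succ 0$ this is an integrable Gaussian kernel, so the (unique) normalized density proportional to it is that of $MVN(M^{-1}d, M^{-1})$; therefore $U \mid Z = z \sim MVN(M^{-1}d, M^{-1})$, as claimed.

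There is no genuinely hard step here — this is the standard linear--Gaussian posterior — but two points merit care: the bookkeeping in the quadratic expansion (tracking which cross terms are symmetric and which pieces are $u$-free and can be absorbed), and the small logical remark that matching the exponent of an integrable Gaussian kernel pins down the distribution, which is legitimate precisely because $M \succ 0$. An alternative would be to form the joint Gaussian vector $(U,Z)$ with mean $(0,\nu)$ and covariance $\bigl(\begin{smallmatrix} I & \Lambda' \\ \Lambda & \Lambda\Lambda' + \Psi \end{smallmatrix}\bigr)$, apply the usual conditioning formulas, and invoke the Woodbury identity to show $I - \Lambda'(\Lambda\Lambda'+\Psi)^{-1}\Lambda = M^{-1}$ and $\Lambda'(\Lambda\Lambda'+\Psi)^{-1}(z-\nu) = M^{-1}d$; this is equivalent but relocates the work into the matrix-inversion lemma, so I would present the density computation above.
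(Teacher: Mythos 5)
Your proposal is correct and follows essentially the same route as the paper's proof: apply Bayes' rule, expand the quadratic forms in the exponent, and complete the square in $u$ to recognize the $MVN(M^{-1}d, M^{-1})$ kernel. The only additions beyond the paper's argument are the (welcome but inessential) explicit observation that $M \succeq I \succ 0$ guarantees integrability, and the sketched alternative via joint-Gaussian conditioning.
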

\begin{proof}
    Since $Z$ and $U$ have densities, we can consider 
    $f_{U|Z=z}(u) = f_{Z|U=u}(z)f_U(u)/f_Z(z)$ using Bayes rule. 
    Then, up to a constant in $z$,
    \begin{align*}
        \frac{f_{Z|U=u}(z)f_U(u)}{f_Z(z)} &\propto \exp\left(-\frac{1}{2}\left(z-\Lambda u-\nu\right)'\Psi^{-1}\left(z-\Lambda u-\nu)\right)\right)\exp\left(-\frac{1}{2}u'u\right)\\
        &= \exp\left(-\frac{1}{2}\left((z-\nu)'\Psi^{-1}(z-\nu)+u'(\Lambda'\Psi^{-1}\Lambda+I)u-2(z-\nu)'\Psi^{-1}\Lambda u\right)\right)\\
        &\propto \exp\left(-\frac{1}{2}\left(u - M^{-1} d\right)'M\left(u-M^{-1}d\right)\right) 
    \end{align*}
\end{proof}

Using the matrix augmentation from Observation \ref{matrixaug}, we arrive at the following corollary.

\begin{corr}\label{ufromzastar}
    Let $A=b'U+c+\varepsilon_A$ where $U,b,c,\varepsilon_A,$ and $Z$ are defined as in Assumption 4.3. If $M^*=\begin{bmatrix}
        \Lambda'&b
    \end{bmatrix}
   (\Psi^*)^{-1} 
    \begin{bmatrix}
        \Lambda\\b'
    \end{bmatrix}
    +I
    $ is invertible then $U|Z=z,A=a\sim MVN((M^*)^{-1}d^*,(M^*)^{-1})$ where $d^*=\begin{bmatrix}
        \Lambda' & b
    \end{bmatrix}
    (\Psi^*)^{-1}\begin{bmatrix}
        z-\nu\\a-c
    \end{bmatrix}
    $.
\end{corr}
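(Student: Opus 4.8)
The plan is to read off Corollary~\ref{ufromzastar} from Lemma~\ref{ufromz} applied to the augmented measurement vector of Observation~\ref{matrixaug}, rather than repeating the Bayes-rule computation. Following the convention of this subsection, we ignore $X$, so $U \sim MVN(0, I) \in \R^k$. Observation~\ref{matrixaug} shows that the stacked vector $\tilde Z := \begin{bmatrix} Z\\ A\end{bmatrix}$ is itself a factor-analysis indicator for $U$: conditionally on $U$ it is Gaussian with mean $\tilde\Lambda U + \tilde\nu$, where $\tilde\Lambda := \begin{bmatrix}\Lambda\\ b'\end{bmatrix}$ and $\tilde\nu := \begin{bmatrix}\nu\\ c\end{bmatrix}$, and with covariance $\Psi^*$, the block-diagonal matrix with blocks $\Psi$ and the scalar $1$. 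Hence the triple $(\tilde Z, \tilde\Lambda, \Psi^*)$ satisfies exactly the hypotheses that Lemma~\ref{ufromz} imposes on $(Z, \Lambda, \Psi)$.

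Next I would check the two conditions Lemma~\ref{ufromz} needs. First, $\Psi^*$ is positive definite, hence invertible, since it is block diagonal with $\Psi$ (positive definite by Assumption~\ref{assm:parametric}) and the positive scalar $1$. Second, the matrix playing the role of $M$ in the lemma is $\tilde\Lambda'(\Psi^*)^{-1}\tilde\Lambda + I$; expanding the block product gives $\begin{bmatrix}\Lambda' & b\end{bmatrix}(\Psi^*)^{-1}\begin{bmatrix}\Lambda\\ b'\end{bmatrix} + I = M^*$, which is invertible by hypothesis of the corollary. One can also note this is automatic whenever $\Lambda$ is full rank: then $\tilde\Lambda$ has full column rank, and the argument of Lemma~\ref{Minvertible} applies verbatim with $\tilde\Lambda$ and $\Psi^*$ in place of $\Lambda$ and $\Psi$.

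Applying Lemma~\ref{ufromz} then yields $U \mid \tilde Z = \tilde z \sim MVN\big((M^*)^{-1}\tilde d,\ (M^*)^{-1}\big)$, where $\tilde d := \tilde\Lambda'(\Psi^*)^{-1}(\tilde z - \tilde\nu)$. Taking $\tilde z = \begin{bmatrix} z\\ a\end{bmatrix}$ gives $\tilde z - \tilde\nu = \begin{bmatrix}z - \nu\\ a - c\end{bmatrix}$, so $\tilde d = \begin{bmatrix}\Lambda' & b\end{bmatrix}(\Psi^*)^{-1}\begin{bmatrix}z - \nu\\ a - c\end{bmatrix} = d^*$. Since conditioning on $\tilde Z = \begin{bmatrix} z\\ a\end{bmatrix}$ is the same event as conditioning jointly on $Z = z$ and $A = a$, this is precisely the claimed $U \mid Z = z, A = a \sim MVN\big((M^*)^{-1} d^*,\ (M^*)^{-1}\big)$.

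There is no genuine obstacle here: all the substance is already in Lemma~\ref{ufromz}, and the corollary is the routine observation that stacking $Z$ with $A$ preserves the factor-analysis structure, so that the posterior of $U$ is again Gaussian with the stated parameters. The only points requiring attention are confirming that $(\Psi^*)^{-1}$---which appears inside both $M^*$ and $d^*$---is well defined, and keeping track of the $X$-free convention of this subsection; were one to condition on $X$ instead, the identical argument would go through with $U \mid X \sim MVN(\Gamma X, I)$, which shifts the posterior mean by $(M^*)^{-1}\Gamma X$ but leaves the posterior covariance and the forms of $M^*$ and $d^*$ unchanged.
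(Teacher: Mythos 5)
Your proposal is correct and matches the paper's intended argument exactly: the paper derives this corollary by the same route, applying Lemma~\ref{ufromz} to the stacked vector $[Z'\ A]'$ with the augmented loading matrix and block-diagonal covariance $\Psi^*$ from Observation~\ref{matrixaug}. Your write-up simply makes explicit the bookkeeping (positive definiteness of $\Psi^*$, identification of the augmented $M$ with $M^*$, and of $\tilde d$ with $d^*$) that the paper leaves implicit.
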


We can compute $U|Z,X$ for the MIMIC model in a similar manner to the computation for $U|Z$ above.  We have
\begin{align*}
    f_{U|Z,X}(u) &= \frac{f_{U,Z,X}(u, z, x)}{f_{Z,X}(z, x)}\\ &= \frac{f_{Z|U}(z)f_{U|X}(u)f_X(x)}{f_{Z|X}(z)f_X(x)} \\&= \frac{f_{Z|U}(z)f_{U|X}(u)}{f_{Z|X}(z)} \\&\propto f_{Z|U}(z)f_{U|X}(u).
\end{align*}

Therefore
\begin{align*}
    f_{U|Z,X}(u) &\propto \exp\left(-\frac{1}{2}\left[(z-\Lambda U - \nu)'\Psi^{-1}(z-\Lambda U - \nu)' + (U - \Gamma X)'(U-\Gamma X)\right]\right) \\
    &\propto \exp\left(-\frac{1}{2}\left[U'\Lambda'\Psi^{-1}\Lambda U - 2 (z-\nu)'\Psi^{-1}\Lambda U + U'U -2X'\Gamma'U\right]\right)\\
    &\propto \exp\left(-\frac{1}{2}\left[U'(\Lambda'\Psi^{-1}\Lambda + I)U -2\left((z-\nu)'\Psi^{-1}\Lambda + X'\Gamma'\right)U\right]\right)\\
    &\propto \exp\left(-\frac{1}{2}(U'MU -2(d' + X'\Gamma')U\right)\\
    &\propto \exp\left(-\frac{1}{2}\left(U - M^{-1}(d+\Gamma X)\right)'M\left(U - M^{-1}(d+\Gamma X)\right)\right)
\end{align*}

so $U|Z,X\sim MVN(M^{-1}(d + \Gamma X), M^{-1})$.

Similarly to the process for incorporating $A$ previously, if we augment our matrices we can show that $U|Z,X,A\sim MVN((M^*)^{-1}(d^*+\Gamma X), (M^*)^{-1}).$

\section*{Supporting Information B: Simulated Data Robustness Checks}\label{app:robust}
Implementations are available at 

\url{https://github.com/HaleyColgateKottler/LatentProxyVars}.

\subsection*{Comparison methods}\label{app:compmethods}
We provide R code used for the comparison methods to enable replication.  Each function requires a dataframe of observations of $Z,A,Y$.

\vspace{.25in}
\noindent
We use a modification of the usual IPW algorithm to allow for continuous treatments.  Note that for binary treatments we use the usual IPW algorithm with logistic regression.
\begin{verbatim}
continuousIPWest <- function(df) {
  AZ <- subset(df, select = -c(Y))
  Z <- subset(AZ, select = -c(A))
  
  m1 <- glm(formula = A ~ ., family = gaussian(link = "identity"), data = AZ)
  meanAZ <- predict(m1, newdata = Z, type = "response")
  varAZ <- var(df$A - meanAZ)
  weightings <- dnorm(df$A, mean = meanAZ, sd = sqrt(varAZ))
  
  m2 <- glm(formula = A ~ 1, family = gaussian(link = "identity"), data = AZ)
  meanA <- predict(m2, newdata = df, type = "response")
  varA <- var(df$A - meanA)
  numerators <- dnorm(df$A, mean = meanA, sd = sqrt(varA))
  
  df$full.weights <- numerators / weightings
  df$A2 <- df$A^2
  design <- svydesign(
    id = ~1,
    weights = ~full.weights,
    data = df
  )
  
  m3 <- svyglm(
    formula = formula(paste("Y ~ A + A2 + ",
    paste0(colnames(Z), collapse = " +"))),
    family = gaussian(link = "identity"),
    data = df, design = design)
  
  newdata1 <- data.frame("A" = 1, Z, "A2" = 1)
  EY1 <- mean(predict(m3, newdata1))
  newdata0 <- data.frame("A" = 0, Z, "A2" = 0)
  EY0 <- mean(predict(m3, newdata0))
  
  return(EY1 - EY0)
}
\end{verbatim}

\vspace{.25in}
\noindent
We use a simple linear model for $Y$ in terms of $A$ and $Z$, and extract the coefficient on $A$ for the ATE estimate.
\begin{verbatim}
linearEst <- function(df) {
  m1 <- lm(Y ~ ., df)
  linear.ATE <- m1$coefficients["A"]

  return(linear.ATE)
}
\end{verbatim}

\vspace{.25in}
\noindent 
For proximal causal inference, you must additionally provide lists of the NCOs and NCEs that are included in the dataframe.

\begin{verbatim}
proximal_causal <- function(df, nco.names, nce.names){
  wformulas <- paste(nco.names,
                    " ~ A + ", paste0(nce.names, sep = "",
                    collapse = " + "),
                    " + ", paste0("A * ", nce.names, sep = "",
                    collapse = " + "), sep = "")
  m.hW <- lapply(wformulas, lm, data = df)
  wav <- sapply(m.hW, predict, data = df[,c("A", "Y", nce.names)])
  
  prior.names <- colnames(df)
  df <- cbind(df, wav)
  colnames(df) <- c(prior.names,
                    paste("wav", 1:length(nco.names), sep = ""))
  
  # estimate tau_a
  m1formula <- paste("Y ~ A + ",
                     paste0("wav", 1:length(nco.names), collapse = " + "),
                     " + ",
                     paste0("A * ", paste("wav", 1:length(nco.names),
                     sep = ""), collapse = " + "),
                     sep = "")
  m1 <- lm(m1formula, df)
  alpha <- m1$coefficients[c("(Intercept)",
                           paste("wav", 1:length(nco.names), sep = ""))]
  gamma <- m1$coefficients[c("A", paste("A:wav", 1:length(nco.names),
                           sep = ""))]
  
  # calc CATE
  mWVformulas <- paste(nco.names,
                      " ~ ", paste0(nce.names, collapse = " + "))
  m.WVs <- lapply(mWVformulas, lm, data = df)
  WVs <- lapply(m.WVs, predict, data = df)
  CATE <- gamma[1] + gamma[2:(1 + length(nco.names))] %*% 
            t(sapply(WVs, unlist))
  return(mean(CATE))
}

\end{verbatim}

\vspace{.25in}
\noindent
We implement a standard IV algorithm.
\begin{verbatim}
IVest <- function(df) {
  AZ <- subset(df, select = -c(Y))

  m1 <- lm(formula = A ~ ., data = AZ)
  df$x <- predict(m1, newdata = df)

  m2 <- lm(formula = Y ~ x, data = df)

  IV_ATE <- m2$coefficients[["x"]]
  return(IV_ATE)
}
\end{verbatim}

\subsection*{Parameters}
These are the parameters used for the synthetic data robustness study.

\subsubsection*{Consistency} Following the model introduced in Section~4.2, data is generated using 
$k = 2$, $p = 8$, $\nu=\overset{\rightharpoonup}{0}$, $$\lambda=\begin{bmatrix}
  0.534& -0.644\\
  0.000&  0.837\\
  0.425&  0.648\\
  0.207&  0.811\\
 -0.103&  0.889\\
 -0.020&  0.774\\
  0.256&  0.797\\
  0.894& -0.015
\end{bmatrix}$$

$diag(\psi) = [
    0.548, 0.548, 0.632, 0.548, 0.447, 0.632, 0.548, 0.447]$

    $b = [-0.113,  0.829]$, 
    $c = 0.2$

    $Y = \left(\begin{bmatrix}
        -0.8\\  0.5\\ -0.9
    \end{bmatrix} + A \begin{bmatrix}
        0.3\\ 0.9\\ 0.4
    \end{bmatrix}\right)\begin{bmatrix}1\\U\end{bmatrix} + \varepsilon_Y $
    where $\varepsilon_Y\sim\mathcal{N}(0,0.1)$

\subsubsection*{IV}
For this test, $U\sim\mathcal{N}(0,I)\in\R^2$ and $Z\sim\mathcal{N}(0,I)\in\R^6$.  The treatment assignment is generated via $A = \Lambda\begin{bmatrix}
    Z\\U
\end{bmatrix} + \varepsilon_A$ where $\varepsilon_A\sim\mathcal{N}(0,1)$.  Then $$Y=(\alpha+A\gamma)\begin{bmatrix}
    1\\U
\end{bmatrix} + \varepsilon_Y$$ where $\varepsilon_Y\sim\mathcal{N}(0,0.1)$.  Additionally, $
    \Lambda = [ 0.245, 0.614, 0.272, 0.140, 0.002, -0.007, -0.013, 0.263
],
$ $\alpha=[-0.6,0.2,0.7]$, and $\gamma=[0.3,0.8,0.4]$.

\subsubsection*{No latent variable}
For this test, $Z\sim\mathcal{N}(0,\Psi)\in\R^6$, where 

\begin{align*}
    diag(\Psi) = [&0.548, 0.632, 0.548, 0.447,0.632, 0.4472136, 0.632, 0.447].
\end{align*}
 Then $A = \Lambda Z + \varepsilon_A$ where $\varepsilon_A\sim\mathcal{N}(0,1)$ and 

\begin{align*}
\Lambda=[0.002, 0.692, 0.004, 0.347, -0.014, 0.005, 0.000, 0.008].    
\end{align*}
Finally, $Y=(\alpha+A\gamma)\begin{bmatrix}
    1\\Z
\end{bmatrix}+\varepsilon_Y$ with $\varepsilon_Y\sim\mathcal{N}(0,0.01)$, 

$\alpha=[-0.9,  0.7,  0.6, -0.6,  0.8,  0.3,  0.2, -0.5,  0.4]$ and 

$\gamma=[0.3,  0.4,  0.7,  0.5,  0.2, -0.8, -0.7,  0.9, -0.6]$.

\subsubsection*{Skew normal} This test followed the model from Section~4.2 except that $U\in\R$ is skew normal with shape parameter 5, location parameter $-1.26$, and scale parameter $1.606$.  Additionally, 

$\Lambda=[0.632, 0.548, -0.548, 0.632, 0.548]$, 

$diag(\Psi)=[0.775, 0.837, 0.837, 0.775, 0.837]$, 

$b=-0.632$, $c=0.2$, and $Y=([-0.9,0.7] + A[0.3, 0.6])\begin{bmatrix}
    1\\U
\end{bmatrix} + \varepsilon_Y$ where $\varepsilon_Y\sim\mathcal{N}(0,0.001)$.

\subsubsection*{Binary $A$}
For $U$ and $Z$, this followed the model from Section~4.2. The treatment was generated with a probit model via $$A=\begin{cases}
    1 & \text{ if }b'U+c + \varepsilon_A\\
    0 & \text{ otherwise}
\end{cases}$$ where $\varepsilon_A\sim\mathcal{N}(0,1)$.  Here $k=3$, $p=10$, 

$\Lambda=\begin{bmatrix}
  -0.033&  0.104& 0.830\\
  -0.027&  0.544& 0.062\\
  -0.042&  0.494& 0.674\\
   0.465& -0.041& 0.531\\
  -0.017&  0.332& 0.436\\
   0.142&  0.035& 0.528\\
   0.058&  0.169& 0.518\\
   0.256&  0.026& 0.796\\
   0.095&  0.196& 0.808\\
   0.038&  0.293& 0.783
\end{bmatrix},$

$
diag(\Psi)=[0.548, 0.837, 0.548, 0.707, 0.837, 0.837, 0.837, 0.548, 0.548, 0.548]$,

$b=[0.242, 0.039, 0.490]$, and $c=0.2$.  Finally, 

$Y=([0.7, 0.5, 0.6, -0.4]+A[0.3, 0.9, -0.8, 0.2])\begin{bmatrix}
    1\\U
\end{bmatrix}+\varepsilon_Y$ where $\varepsilon_Y\sim\mathcal{N}(0,0.01)$.

\subsubsection*{Ratio of number of proxies to dimension of latent variable}
For this test, data generation followed the model from Section~4.2 with $k=2$.  To allow for increasing numbers of proxies, parameters were chosen for $p=20$, and then truncated as necessary for smaller $p$ values.  We used

\begin{multline*}
    diag(\Psi) = [0.894, 0.894, 0.837, 0.775, 0.775, 0.837, \\0.837, 0.837, 0.894, 0.894, 0.894, 0.894, 0.894,\\ 0.775, 0.837, 0.837, 0.775, 0.894, 0.894, 0.894],
\end{multline*}

$b=[0.044, 0.631]$, $c=0$, 

$Y=([-0.7, -1.9, 1.3]+A[0.3, -0.2, 0.0])\begin{bmatrix}
    1\\U
\end{bmatrix} + \varepsilon_Y$ where $\varepsilon_Y\sim \mathcal{N}(0,0.1)$, and

$\Lambda=\begin{bmatrix}
   0.436&  0.098\\
   0.445&  0.048\\
   0.515&  0.188\\
  -0.054&  0.630\\
  -0.031&  0.632\\
   0.548& -0.009\\
   0.544&  0.064\\
   0.542&  0.079\\
  -0.038&  0.446\\
   0.435&  0.106\\
   0.435&  0.102\\
   0.191&  0.404\\
   0.447&  0.013\\
  -0.016&  0.632\\
   0.534&  0.120\\
  -0.037&  0.546\\
   0.290&  0.562\\
   0.144&  0.423\\
  -0.038&  0.446\\
   0.446& -0.035
\end{bmatrix}.$

\subsubsection*{Higher order outcome generation models}
Data generation for $(U,Z,A)$ followed Section~4.2 with $k=1$, $p=6$,

$\Lambda =\begin{bmatrix}
    -0.775\\ -0.894\\ 0.775\\ 0.775\\ -0.775\\ -0.775
\end{bmatrix},$

\begin{align*}
    diag(\Psi) = [0.632, 0.447, 0.632, 0.632, 0.632, 0.632],
\end{align*}

$b=0.837$ and $c=0$. Finally,

$Y = \left([0.5, 0.7, -0.8] + A[-0.4, -0.9, 0.7]\right)\begin{bmatrix}
    1\\U\\U^2
\end{bmatrix} + \varepsilon_Y$ with $\varepsilon_Y\sim\mathcal{N}(0,0.1)$.

\subsubsection*{Coverage} Following the model in Section~4.2, data was generated with $p=6$ and $k=2$ using

$\Lambda=\begin{bmatrix}
    0.498&  0.673\\
-0.099&  0.768\\
 0.886& -0.125\\
 0.600& -0.490\\
-0.617&  0.566\\
 0.326&  0.833
\end{bmatrix},$

$diag(\Psi)=[0.548, 0.632, 0.447, 0.632, 0.548, 0.447]$,

$b=[0.833, -0.078]$ and $c=0$. Finally,

$Y=([0.7, -0.5, -0.8]+A[0.5, 0.7, 0.2])\begin{bmatrix}
    1\\U
\end{bmatrix}+\varepsilon_Y$ where $\varepsilon_Y\sim\mathcal{N}(0,0.1)$.

\subsection*{Results with Comparison to the Instrument Variables Method}
Since instrument variable techniques are very sensitive to assumptions, the resulting estimates are often orders of magnitude different from other methods. However, for completeness we include those in Figure \ref{fig:lineplotpanel_withIV} and \ref{fig:boxplotpanel_withIV}.
\begin{figure}
    \centering
    \includegraphics[width=\linewidth]{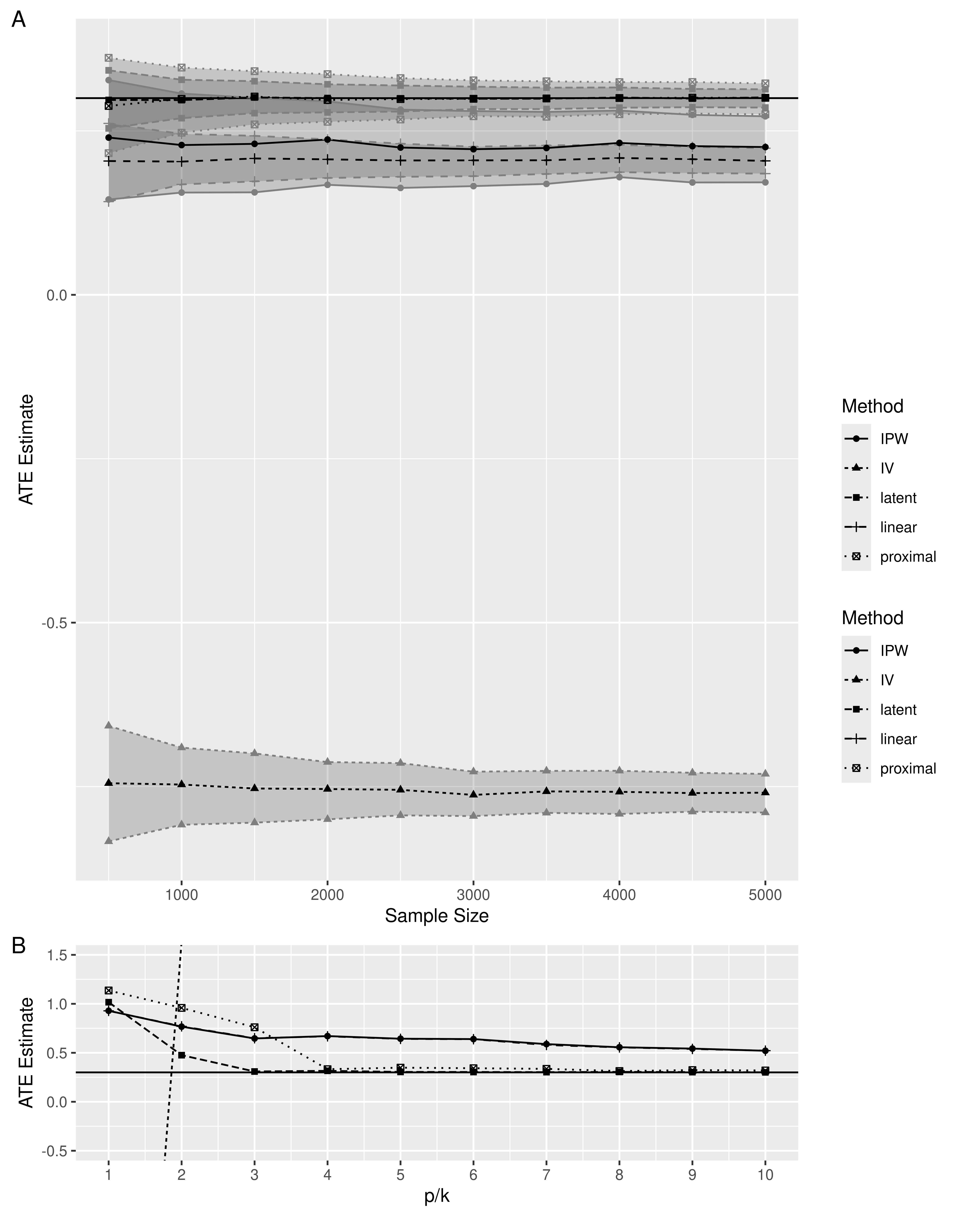}
    \caption{A) Estimation accuracy of our method and comparison methods as a function of sample size, reported as the median estimate and interquartile ranges (25th and 75th percentiles) across simulation replications. B) Estimation accuracy as function of different ratios of proxies to latent confounders ($p/k$), reported as the median estimate across simulation replications. Our method is the most accurate for all sample sizes, rivaled only by proximal causal inference which has a higher mean squared error.  With $p/k\geq2$ our method is more accurate than the comparison methods, though with higher $p/k$ ratios, proximal causal inference obtains similar accuracy.}
    \label{fig:lineplotpanel_withIV}
\end{figure}

\begin{figure}
    \centering
    \includegraphics[width=\linewidth]{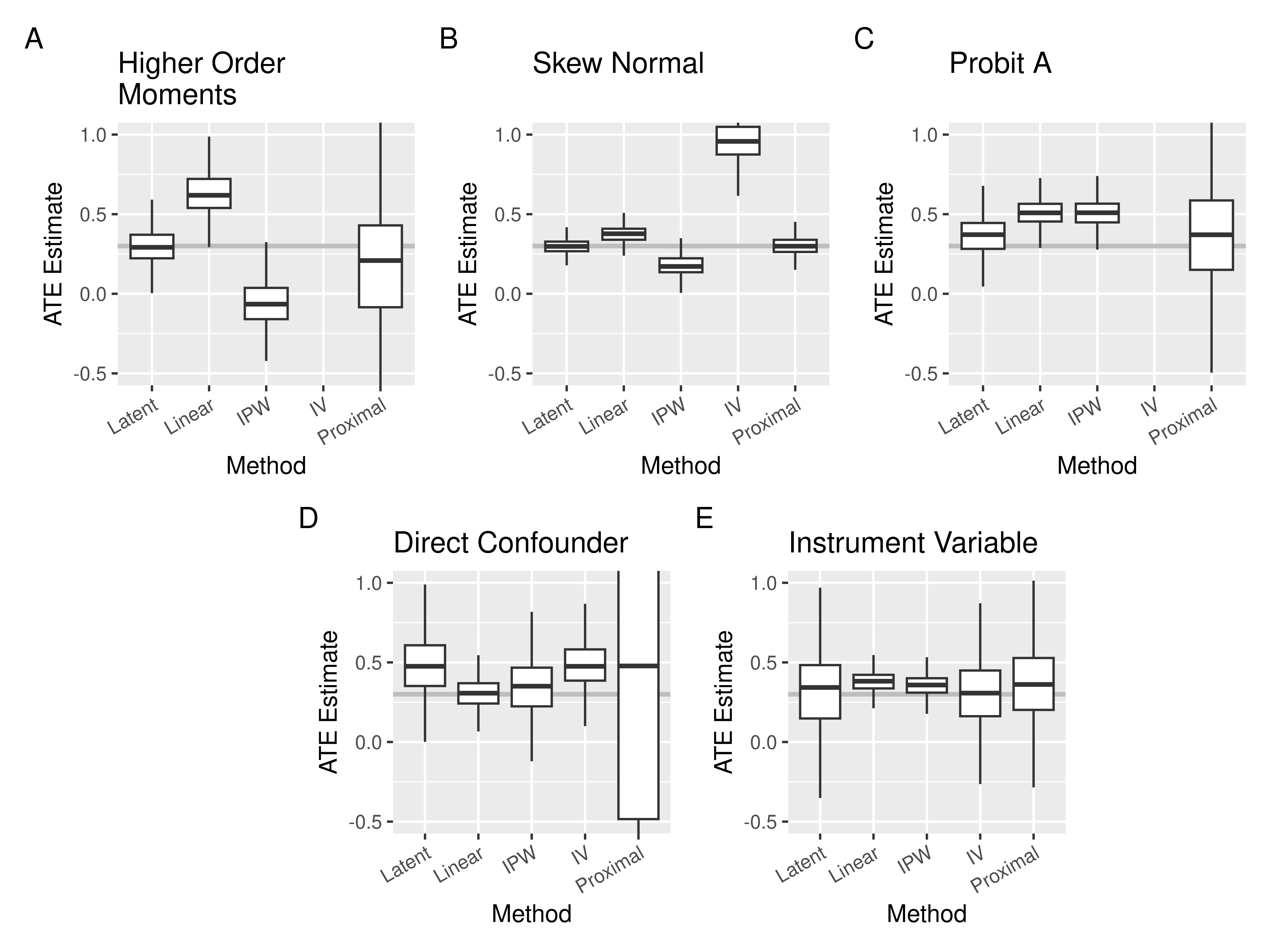}
    \caption{Performance of methods under different modeling and structural assumptions: A) the true outcome generation process is quadratic in $U$ but the estimator is linear, B) the latent variable is skew normal instead of normal, C) the treatment is binary but estimated as continuous, D) $Z$ is a confounder not a proxy, E) $Z$ is an instrument not a proxy. The centerline is the median, the hinges show the inter-quartile range (IQR), and the whiskers show the minimum of the highest estimate or 1.5IQR from the upper hinge and the maximum of the lowest estimate or 1.5IQR from the lower hinge. }
    \label{fig:boxplotpanel_withIV}
\end{figure}

\end{document}